\def\RR{\mathbbm{R}}
\newcommand{\bd}[1]{\boldsymbol{#1}}
\newcommand{\wb}{\bd{w}}
\newcommand{\Tr}{\mathrm{Tr}}
\newcommand*\xbar[1]{\hbox{\vbox{
       \hrule height 0.6pt 
       \kern0.3ex
       \hbox{%
         \kern-0.2em
         \ensuremath{#1}%
         \kern 0.0em
         }}}}
\newcommand*\xxbar[1]{\hbox{\vbox{
       \hrule height 0.6pt 
       \kern0.3ex
       \hbox{%
         \kern-0.0em
         \ensuremath{#1}%
         \kern 0.0em
         }}}}
\newcommand{\Ebw}{\,\xxbar{\mathcal{E}}^N\!(\wb)}
\newcommand{\ebw}{\,\xxbar{\mathcal{E}}^1_N(\wb)}
\newcommand{\bra}[1]{\mbox{$\langle #1 |$}}
\newcommand{\ket}[1]{\mbox{$| #1 \rangle$}}
\newtheorem{thm}{Theorem}
\newtheorem{lem}[thm]{Lemma}
\begin{document}
\title{Refining ensemble $N$-representability of one-body density matrices \\ from partial information}

\author{Julia Liebert}
\affiliation{Department of Physics, Arnold Sommerfeld Center for Theoretical Physics, Ludwig-Maximilians-Universität München, Theresienstrasse 37, 80333 München, Germany}
\affiliation{Munich Center for Quantum Science and Technology (MCQST), Schellingstrasse 4, 80799 München, Germany}
\author{Anna O. Schouten}
\affiliation{Department of Chemistry and The James Franck Institute, The University of Chicago, Chicago, IL 60637}
\author{Irma Avdic}
\affiliation{Department of Chemistry and The James Franck Institute, The University of Chicago, Chicago, IL 60637}
\author{Christian Schilling}
\email{c.schilling@physik.uni-muenchen.de}
\affiliation{Department of Physics, Arnold Sommerfeld Center for Theoretical Physics, Ludwig-Maximilians-Universität München, Theresienstrasse 37, 80333 München, Germany}
\affiliation{Munich Center for Quantum Science and Technology (MCQST), Schellingstrasse 4, 80799 München, Germany}
\author{David A. Mazziotti}
\email{damazz@uchicago.edu}
\affiliation{Department of Chemistry and The James Franck Institute, The University of Chicago, Chicago, IL 60637}
\date{Submitted June 11, 2025}

\begin{abstract}
The $N$-representability problem places fundamental constraints on reduced density matrices (RDMs) that originate from physical many-fermion quantum states. Motivated by recent developments in functional theories, we introduce a hierarchy of ensemble one-body $N$-representability problems that incorporate partial knowledge of the one-body reduced density matrices (1RDMs) within an ensemble of $N$-fermion states with fixed weights $w_i$. Specifically, we propose a systematic relaxation that reduces the refined problem---where full 1RDMs are fixed for certain ensemble elements---to a more tractable form involving only natural occupation number vectors. Remarkably, we show that this relaxed problem is related to a generalization of Horn’s problem, enabling an explicit solution by combining its constraints with those of the weighted ensemble $N$-representability conditions. An additional convex relaxation yields a convex polytope that provides physically meaningful restrictions on lattice site occupations in ensemble density functional theory for excited states.
\end{abstract}


\maketitle

\section{Introduction\label{sec:intro}}

The $N$-representability problem for reduced density matrices (RDMs) is a fundamental challenge in quantum chemistry and many-body quantum physics, particularly in electronic structure theory~\cite{C63, GP64, Kummer67, Erdahl78, Coleman-book, KL06, AK08, M12, M12-pra, Maciazek_2020, Schilling_2020, M23}. While computing an RDM from a known quantum state is straightforward, the inverse problem---determining whether a given RDM corresponds to an actual $N$-particle quantum state---is significantly more difficult. This difficulty is formalized by the QMA-completeness of the two-body $N$-representability problem~\cite{LCV07, OIWF22}. Even for the one-particle reduced density matrix (1RDM), compatibility with a pure $N$-fermion quantum state imposes spectral constraints that are difficult to compute and scale poorly with system size~\cite{Klyachko2004, KL06, AK08, SGC13, S15, CS17stability, SAKLWCR18, Maciazek_2020, Schilling_2020}. Nonetheless, $N$-representability constraints play a crucial role in both method development in quantum chemistry~\cite{C63, M04, Mazziotti.2011mwm, Schilling18, P21, SSM22, SSM22-2, JLDV23, DP24} and in quantum algorithms for correlated systems~\cite{RBM18, SM20, Google20, RM20, AM24, AM24-pra}.

The one-body $N$-representability problem is especially important in the context of functional theories, where it defines the domain of the universal functional in ground-state reduced density matrix functional theory (RDMFT)~\cite{GPB05, PG16, KSPB16, SKB17, SB18, BM18, Piris19, MPU19, SBM19, Piris21-1, P21, DVR21, LKO22, GBM22, MP22, SNF22, LCS23, SG23, GBM23, GBM24, BWR24, VMSS24, CS24, CG24, YS24}. A 1RDM is compatible with an ensemble of $N$-fermion quantum states if and only if it satisfies the Pauli exclusion principle and is normalized to the particle number~\cite{Watanabe39, C63}. To extend functional theories to excited states, one generalizes the Rayleigh-Ritz variational principle to mixed states of the form $\Gamma = \sum_i w_i \ket{\Psi_i}\!\bra{\Psi_i}$, where the weights \( \boldsymbol{w} \) are fixed~\cite{GOK88a, LHS24}. In this setting, the domain of the universal functional in $\boldsymbol{w}$-RDMFT~\cite{SP21, LCLS21, LS23-njp, LS23-sp} and lattice ensemble density functional theory (EDFT)~\cite{GOK88c, YPBU17, Fromager2020-DD, Loos2020-EDFA, Cernatic21, GK21, Yang21, GKGGP23, GL23, SKCPJB24, CPSF24} is determined by the solution to the $\boldsymbol{w}$-ensemble one-body $N$-representability problem~\cite{SP21, LCLS21, CLLPPS23, LCLMS24}, whose refinement through additional physical information is the focus of this work.

In this paper, we show that \textit{partial information} about the ground and/or excited states---specifically their one-body reduced density matrices (1RDMs)---can be used to significantly refine the set of admissible 1RDMs in ensemble calculations for excited states. Building on the traditional $\boldsymbol{w}$-ensemble framework, in which an ensemble of quantum states contributes with fixed weights $w_i$, we formulate a refined ensemble one-body $N$-representability problem that incorporates not only the standard representability constraints but also additional structure imposed by fixing one or more of the 1RDMs associated with selected contributing states, such as the ground state $|\Psi_1\rangle$ and low-lying excited states $|\Psi_i\rangle$.

This formulation generalizes Coleman's classical solution, which determines the conditions under which a given 1RDM is representable by an ensemble of $N$-fermion states, without assuming any knowledge of the contributing states. In contrast, the refined problem considered here assumes not only that the contributing states have fixed weights $w_i$, but also that the 1RDMs of some of these states are known. This partial information leads to a strictly smaller and more physically meaningful set of ensemble 1RDMs by imposing more stringent $N$-representability constraints on the unknown remainder of the ensemble. The resulting framework is particularly relevant in practical excited-state simulations, where contributing states are often computed sequentially and partial one-body information is readily available.


The refined $\boldsymbol{w}$-ensemble one-body $N$-representability problem is highly complex, largely because its solution depends on the natural orbitals of the fixed 1RDMs. As a result, obtaining a complete solution is generally impractical. To address this challenge, we introduce a systematic relaxation in which the fixed 1RDMs are replaced by their natural occupation number vectors, discarding information about the associated orbitals. The resulting relaxed set is characterized purely by spectral data and can be fully described by combining the solution to a generalized version of Horn’s problem~\cite{Knutson, Bhatia, LP03} with the $\boldsymbol{w}$-ensemble $N$-representability constraints~\cite{SP21, LCLS21, CLLPPS23, LCLMS24}. While this yields a complete solution to the relaxed problem, the number of spectral constraints grows rapidly with system size, limiting its scalability.

To overcome this limitation, we derive an outer approximation that provides a set of system-size independent constraints. These constraints remain effective regardless of the number of particles or the dimensionality of the one-particle Hilbert space, making them particularly valuable for quantum chemical applications where large basis sets are required. The tightness of this approximation depends on the deviation of the natural occupation number vectors from the Hartree--Fock point, and thus reflects the degree of correlation in the corresponding $N$-fermion quantum states.

This paper is organized as follows. Section~\ref{sec:recap} introduces the notation and terminology for $\boldsymbol{w}$-ensembles that will be used throughout the paper. Section~\ref{sec:Nrep} defines the refined $\boldsymbol{w}$-ensemble one-body $N$-representability problem and outlines the relaxation strategy that yields a practical characterization of the admissible 1RDMs. In Section~\ref{sec:residuals}, we derive the system-size independent constraints and examine their implications in molecular systems, focusing on ensembles with weakly and strongly correlated ground states. Finally, Section~\ref{sec:spectral} shows how taking the convex hull of the spectral constraints leads to a convex set that also governs the lattice site occupation number vectors in ensemble DFT for excited states.

\section{Short recap of $\bd w$-ensembles \label{sec:recap}}

In this section, we briefly introduce the notation and key aspects of the $\bd w$-ensemble one-body $N$-representability problem \cite{SP21, LCLS21, CLLPPS23} as used in this work. We then reformulate the problem in terms of probabilistic quantum channels, placing it within the broader context of quantum marginal problems studied in the quantum information community \cite{Klyachko2004, KL06, LCV07, ETRS08, WDGC13, Schilling2014, CDKW14, SOK14, WHG17, BCMW17, CS17stability, MV17, CSW18, YSWNG21, Haapasalo2021, HLA22, GHKPU23}.

We consider systems of $N$ non-relativistic fermions. The $N$-fermion Hilbert space, $\mathcal{H}_N$, is defined as the $N$-fold wedge product of the one-particle Hilbert space $\mathcal{H}_1$, i.e., $\mathcal{H}_N = \wedge^N\mathcal{H}_1$. Throughout this work, we assume $\mathcal{H}_1$ is finite-dimensional. The set of pure quantum states on $\mathcal{H}_N$ is denoted by $\mathcal{P}^N$, and the set of ensemble quantum states, $\mathcal{E}^N$, is the convex hull of the pure states, i.e., $\mathcal{E}^N = \mathrm{conv}(\mathcal{P}^N)$.

The one-particle reduced density operator (1RDM) of an $N$-fermion quantum state $\Gamma \in \mathcal{P}^N, \mathcal{E}^N$ is defined as
\begin{equation}
\gamma \equiv N\mathrm{Tr}_{N-1}[\Gamma]\,.
\end{equation}
Due to the linearity of the partial trace, the set $\mathcal{E}^1_N$ of 1RDMs compatible with ensemble $N$-fermion states is given by
\begin{equation}
\mathcal{E}^1_N\equiv N\Tr_{N-1}[\mathcal{E}^N]\,,
\end{equation}
which is also convex. It is the convex hull of the set $\mathcal{P}^1_N$ of 1RDMs compatible with pure $N$-fermion states, i.e.,
\begin{equation}\label{eq:E1N}
\mathcal{E}^1_N = \mathrm{conv}(\mathcal{P}^1_N)\,.
\end{equation}
The 1RDMs in $\mathcal{P}^1_N$ are called pure state one-body $N$-representable, while the 1RDMs in $\mathcal{E}^1_N$ are referred to as ensemble one-body $N$-representable. Due to the convexity property in Eq.~\eqref{eq:E1N}, the ensemble one-body $N$-representability problem is also known as the convex (relaxed) one-body $N$-representability problem, a terminology that we adopt in Sec.~\ref{sec:Nrep}.

According to Watanabe and Coleman \cite{Watanabe39, C63}, the set $\mathcal{E}^1_N$ of ensemble one-body $N$-representable 1RDMs is fully characterized by the Pauli exclusion principle and the normalization $\Tr_1[\gamma] = N$. In contrast, the set $\mathcal{P}^1_N$ of pure state one-body $N$-representable 1RDMs is determined by the generalized Pauli constraints \cite{KL06, AK08, K09, SGC13, S15, SAKLWCR18, Reuvers21}.

Clearly, pure states are quantum states with a fixed spectrum $\bd w_0 = (1, 0, \dots)$. The set of $N$-fermion quantum states with a generic fixed spectrum $\bd w$ is denoted by $\mathcal{E}^N(\bd w)$, where $\bd w \in \mathbbm{R}^D$ with $1\geq w_1 \geq w_2 \geq \dots \geq w_{D}\geq0$ is a decreasingly ordered vector with $\sum_i w_i = 1$, and $D = \mathrm{dim}(\mathcal{H}_N)$. In general, the set $\mathcal{E}^N(\bd w)$ is not convex, as the sum of two states with fixed spectra may result in a state with a different spectrum. Therefore, we also introduce its convex hull \cite{SP21, LCLS21, CLLPPS23},
\begin{equation}\label{eq:Ebw}
\Ebw \equiv \mathrm{conv}\left(\mathcal{E}^N(\bd w)\right)\,.
\end{equation}
Thus, any $\Gamma\in \Ebw$ can be written as a convex combination $\Gamma = \sum_kp_k\Gamma_k$ with $\Gamma_k=\sum_iw_i\ket{\Psi_i^{(k)}}\!\bra{\Psi_i^{(k)}}\in \mathcal{E}^N(\bd w)$ and $p_k\geq 0, \sum_k p_k=1$.
Due to this convex decomposition, we can further interpret any state $\Gamma\in \Ebw$ as the final state after the application of a probabilistic unitary quantum channel $\chi(\cdot)$ onto a fixed initial state $\Gamma_{\bd w}\in \mathcal{E}^N(\bd w)$ as (recall Uhlmann's theorem \cite{AU82})
\begin{equation}\label{eq:qc-Gw}
\chi(\Gamma_{\bd w}) = \sum_kp_k U_k\Gamma_{\bd w} U_k^\dagger\,,
\end{equation}
where $U_k$ are unitary operators on $\mathcal{H}_N$. Therefore, $\Ebw$ in Eq.~\eqref{eq:Ebw} can be interpreted as the set of states $\Gamma$ that result from applying probabilistic unitary quantum channels to a fixed but arbitrary $\Gamma_{\bd w} \in \mathcal{E}^N(\bd w)$.
In particular, Eq.~\eqref{eq:qc-Gw} provides a quantum information perspective on the concept of convex relaxation of non-convex sets as widely exploited in functional theories such as RDMFT \cite{Valone80, Lieb83, LCLS21, LS23-njp}.

The set of admissible 1RDMs compatible with an $N$-fermion state in either $\mathcal{E}^N(\bd w)$ or $\Ebw$ is defined as \cite{SP21, LCLS21}
\begin{equation}\label{eq:e1Nw}
\mathcal{E}^1_N(\bd w) = N\Tr_{N-1}[\mathcal{E}^N(\bd w)]\,,\,\,\ebw = N\Tr_{N-1}[\Ebw]\,.
\end{equation}
Furthermore, both sets $\mathcal{E}^1_N(\bd w)$ and $\ebw$ of 1RDMs are fully characterized by spectral constraints due to their invariance under one-particle unitary transformations. Specifically, for $\ebw$, the corresponding spectral set
\begin{equation}\label{eq:Sigma-w}
\Sigma(\bd w) \equiv \left\{ \bd\lambda\,|\,\exists \gamma\in \ebw, \pi\in\mathcal{S}_d: \bd\lambda =\pi( \mathrm{spec}^\downarrow(\gamma) ) \right\}
\end{equation}
is indeed a convex polytope as shown in Refs.~\cite{SP21, LCLS21, CLLPPS23}. Here, $\mathcal{S}_d$ denotes the symmetric group of degree $d=\mathrm{dim}(\mathcal{H}_1)$, and the map $\mathrm{spec}^\downarrow(\cdot)$ maps a linear operator on a finite-dimensional Hilbert space to its vector of decreasingly ordered eigenvalues. The hyperplane representation of $\Sigma(\bd w)$ led to the $\bd w$-ensemble one-body $N$-representability constraints, which are more restrictive than the Pauli exclusion principle \cite{SP21, LCLS21, CLLPPS23}.

\section{Refined $\bd w$-ensemble one-body N-representability problem \label{sec:Nrep}}

In this section, we formally introduce and define the refined $\bd w$-ensemble one-body $N$-representability problem, exploring its variants in terms of solvability and convexity. This will eventually lead to a purely spectral refined $\bd w$-ensemble one-body N-representability problem which is, however, less stringent than the original problem where the knowledge of the full 1RDMs of states in the $\bd w$-ensemble is assumed. While this section is more mathematical and technically detailed, it provides valuable insights that may enrich the understanding of the constraints; readers primarily interested in the constraints themselves may proceed directly to Sec.~\ref{sec:residuals}.

\subsection{Scientific problem\label{sec:origin}}

We consider an $N$-fermion state of the form $\Gamma = \sum_{i=1}^D w_i \ket{\Psi_i} \!\bra{\Psi_i}$, which represents a $\bd w$-ensemble $\{w_i, \ket{\Psi_i}\}$ consisting of fixed probabilities $w_i$ and corresponding pure states $\ket{\Psi_i}$. At this stage, we assume the states $\ket{\Psi_i}$ to be mutually orthogonal, i.e., $\langle \Psi_i | \Psi_j \rangle = 0$ for all $i \neq j$, motivated by the fact that they may represent eigenstates of a physical Hamiltonian. This orthogonality assumption will be relaxed later in our analysis.

In the following, we assume additional partial information about the ensemble beyond the fixed probabilities (also called weights) $w_i$, namely that the 1RDMs of some states $\ket{\Psi_i}$ are known as well.
As a simple example, suppose the 1RDM $\gamma^{(1)}$ of a non-degenerate eigenstate $\ket{\Psi_1}$ is known a priori.
Within the framework of the so-called GOK variational principle introduced by Gross, Oliveira and Kohn \cite{GOK88a} (see also Refs.~\cite{Theophilou79, LHS24}), $\gamma^{(1)}$ corresponds to the ground state $\ket{\Psi_1}$ of $H$. This situation arises, for instance, when $\gamma^{(1)}$ is obtained from a prior ground state calculation, and the GOK variational principle is then used to determine additional low-lying eigenenergies of $H$. The constraint imposed by $\gamma^{(1)}$ introduces additional restrictions on the sets $\mathcal{E}^N(\bd w)$ and $\mathcal{E}^1_N(\bd w)$, which we investigate in this paper.

For the general case of a generic number of known 1RDMs $\gamma^{(i)}$ of states $\ket{\Psi_i}$ in the $\bd w$-ensemble $\{w_i, \ket{\Psi_i}\}$, we introduce the index set $K$ as the set of increasingly ordered indices $i$ that label the fixed 1RDMs $\gamma^{(i)}$. By definition, the set $K$ has cardinality $|K| \leq r$, with $i \in \mathbbm{N}\backslash {0}$ and $1 \leq i \leq r$ for all $i \in K$. To keep track of these fixed 1RDMs, we define the tuple
\begin{equation}\label{eq:setKk}
\mathcal{K}_K\equiv \left(\gamma^{(i)}\right)_{i\in K}\,,\quad |K|\leq r\,,
\end{equation}
which contains the 1RDMs $\gamma^{(i)}$ that have been fixed.
To assign each index $i\in K$ with the corresponding fixed 1RDM $\gamma^{(i)}$, we introduce (on a technical level) the map
\begin{equation}
    \mu: K\to \{1, 2, ..., |K|\}\,.
\end{equation}
In particular, this will allow us in the following to write $\gamma^{(i)} = (\mathcal{K}_K)_{\mu(i)}$.

The knowledge of $\mathcal{K}_K$ in Eq.~\eqref{eq:setKk} then implies that the (non-convex) set $\mathcal{E}^N(\bd w)$ is restricted to the subset
\begin{equation}\label{eq:E1NKk}
\mathcal{E}^N(\bd w, \mathcal{K}_K)\equiv \left\{\Gamma\in \mathcal{E}^N(\bd w)| \forall i\in K\!: \ket{\Psi_i}\mapsto \gamma^{(i)} \right\}\,,
\end{equation}
where the map $\ket{\Psi_i}\mapsto \gamma^{(i)}$ is understood as the partial trace map $\gamma^{(i)}=N\Tr_{N-1}[\ket{\Psi_i}\!\bra{\Psi_i}]$.

The corresponding set of 1RDMs $\gamma$ that are compatible with an $N$-fermion state $\Gamma\in \mathcal{E}^N(\bd w, \mathcal{K}_K)$,
\begin{equation}\label{eq:e1NKk}
\mathcal{E}^1_N(\bd w, \mathcal{K}_K)\equiv N\Tr_{N-1}\left[ \mathcal{E}^N(\bd w,\mathcal{K}_K)\right]\,,
\end{equation}
is neither convex nor is it invariant under unitary transformations on the one-particle Hilbert space $\mathcal{H}_1$.
Therefore, the set $\mathcal{E}^1_N(\bd w, \mathcal{K}_K)$ is not solely characterized by spectral constraints, unlike $\mathcal{E}^1_N(\bd w)$ (see Eq.~\eqref{eq:e1Nw}), and its natural orbital dependence is generally difficult to determine. In fact, $\mathcal{E}^1_N(\bd w)$ includes the set $\mathcal{P}^1_N$ of pure state one-body $N$-representable 1RDMs as a special case, specifically for $\bd w_0 = (1, 0, \dots)$. Therefore, characterizing $\mathcal{E}^1_N(\bd w)$, and thus $\mathcal{E}^1_N(\bd w, \mathcal{K}_K) \subseteq \mathcal{E}^1_N(\bd w)$, is expected to be more complex than calculating the generalized Pauli constraints \cite{KL06, AK08, K09}. As a result, directly characterizing $\mathcal{E}^1_N(\bd w, \mathcal{K}_K)$ using refined $\bd w$-ensemble $N$-representability constraints that account for the knowledge of the 1RDMs in $\mathcal{K}_K$ proves to be highly challenging. To address this difficulty, it becomes necessary to systematically relax the definition of $\mathcal{E}^1_N(\bd w, \mathcal{K}_K)$ in Eq.~\eqref{eq:e1NKk}. This relaxation, which we will discuss in detail in the following, gives rise to an outer approximation of the original set.

According to the duality principle for compact convex sets (e.g., see \cite{R97}), such sets can be fully characterized by the intersection of their supporting hyperplanes. Moreover, replacing a set with its convex hull does not alter the result of minimizing or maximizing linear functionals over it, as is commonly done when solving the ground state problem in physics and quantum chemistry including RDMFT, except that the optimizer may no longer be unique. Thus, we first introduce the convex hull of $\mathcal{E}^N(\bd w, \mathcal{K}_K)$ in \eqref{eq:E1NKk} at the $N$-fermion level,
\begin{equation}\label{eq:Ebwg1}
\xxbar{\mathcal{E}}^N(\bd w, \mathcal{K}_k)\equiv \mathrm{conv}\left(\mathcal{E}^N(\bd w,  \mathcal{K}_k)\right)\,.
\end{equation}
To get a deeper understanding of the implications of taking the convex hull in Eq.~\eqref{eq:Ebwg1}, we note that any $\Gamma \in \xxbar{\mathcal{E}}^N(\bd w,\mathcal{K}_K)$ can be written as a convex combination
\begin{equation}\label{eq:qc}
\Gamma = \sum_{j} p_j \Gamma_j = \sum_{i=1}^rw_i\sum_{j}p_j\ket{\Psi_i^{(j)}}\!\bra{\Psi_i^{(j)}}\,,
\end{equation}
where $p_j\geq 0$, $\sum_j p_j=1$, and $\Gamma_j\in \mathcal{E}^N(\bd w,\mathcal{K}_K)$ for all $j$. Thus, the 1RDM of $\Gamma$ in Eq.~\eqref{eq:qc} follows from the linearity of the partial trace as
\begin{equation}
    \gamma  = \sum_{i=1}^rw_i \sum_{j}p_j \gamma^{(i)}_j
\end{equation}
with $\gamma^{(i)}_j \equiv N\Tr_{N-1}[\ket{\Psi_i^{(j)}}\!\bra{\Psi_i^{(j)}}]$. For the indices $i\in K$ (equivalently, $\gamma^{(i)}=(\mathcal{K}_K)_{\mu(i)}$), $\gamma^{(i)}_j = \gamma^{(i)}_{j^\prime}$ forall $j, j^\prime$ such that the 1RDMs associated with $w_i, i\in K$ in the summation in Eq.~\eqref{eq:qc} remain unchanged. For completeness, we note that this property of the set in Eq.~\eqref{eq:Ebwg1} with respect to the 1RDMs $\gamma^{(i)}=(\mathcal{K}_K)_{\mu(i)}$ can also be expressed using the probabilistic quantum channel perspective based on Eq.~\eqref{eq:qc-Gw}.

At the one-particle level, the set $\xxbar{\mathcal{E}}^1_N(\bd w, \mathcal{K}_K)$ of 1RDMs compatible with a state $\Gamma\in \xxbar{\mathcal{E}}^N(\bd w,\mathcal{K}_K)$ is given by
\begin{equation}
\xxbar{\mathcal{E}}^1_N(\bd w, \mathcal{K}_K)= \mathrm{conv}\left(\mathcal{E}^1_N(\bd w, \mathcal{K}_K)\right)\,.
\end{equation}
Since the 1RDMs $\gamma^{(i)} = (\mathcal{K}_K)_{\mu(i)}$ are fixed, the set $\xxbar{\mathcal{E}}^1_N(\bd w, \mathcal{K}_K)$ remains non-unitarily invariant and cannot be fully characterized by spectral constraints. In particular, its dependence on the natural orbitals becomes highly complex when multiple 1RDMs $\gamma^{(i)}$ with different natural orbital bases are fixed.

To make the problem more tractable, it is necessary to relax the orthonormality condition $\langle\Psi_i^{(j)}|\Psi_k^{(j)}\rangle = \delta_{ik}$, which must hold for all $i \in \{1, \ldots, r\}$ in Eq.~\eqref{eq:qc}, for the states $\ket{\Psi_i^{(j)}}$ that correspond to fixed 1RDMs $\gamma^{(i)} \in \mathcal{K}_K$. On the $N$-fermion level, this leads to the set
\begin{align}\label{eq:ENwKc}
    \widetilde{\mathcal{E}}^N(\bd w, \mathcal{K}_K)& \equiv \Big\{\Gamma\in\mathcal{E}^N\,\Big\vert\, \Gamma=\sum_{i\in K}w_i\Gamma^{(i)}+\widetilde{\Gamma},\nonumber\\
   &\quad\quad\Gamma^{(i)} \mapsto (\mathcal{K}_K)_{\mu(i)} ,\widetilde{\Gamma}\in\xxbar{\mathcal{E}}^N(\bd w_{K^c})\Big\}\,,
\end{align}
where we defined the complement $K^c = J\backslash K$ of the index set $K$ on the index set $J = \{1, 2, ..., r\}$, $\bd w_{K^c}\in \RR^{D-|K|}$ is the decreasingly ordered vector of weights $w_i$ that correspond to non-fixed $\gamma^{(i)}$, and
\begin{align}\label{eq:EbNwKc}
    \xxbar{\mathcal{E}}^N(\bd w_{K^c})\equiv \Big\{\Gamma &= \sum_{j}p_j\sum_{i\in K^c}w_i\ket{\Psi_i^{(j)}}\!\bra{\Psi_i^{(j)}}\,\Big\vert\, \nonumber\\
    &\quad \forall i,k\in K^c: \langle\Psi_i^{(j)}|\Psi_{k}^{(j)}\rangle=\delta_{ik} \Big\}\,.
\end{align}
Furthermore, the set of 1RDMs compatible with an $N$-fermion density operator $\Gamma\in \widetilde{\mathcal{E}}^N(\bd w, \mathcal{K}_K)$ is given by
\begin{equation}\label{eq:E1NwKc}
    \widetilde{\mathcal{E}}^1_N(\bd w, \mathcal{K}_K) \equiv N\Tr_{N-1}\left[\widetilde{\mathcal{E}}^N(\bd w, \mathcal{K}_K)\right]\,.
\end{equation}
Since $\gamma^{(i)} = (\mathcal{K}_K)_{\mu (i)}$, we can equivalently characterize the above set as
\begin{equation}\label{eq:tilde-e1NK}
    \widetilde{\mathcal{E}}^1_N(\bd w, \mathcal{K}_K) = \Big\{\gamma = \sum_{i\in K}w_i\gamma^{(i)}+\widetilde{\gamma}\,\Big\vert\, \widetilde{\gamma}\in \xxbar{\mathcal{E}}^1_N(\bd w_{K^c})\Big\}\,,
\end{equation}
where
\begin{equation}
    \xxbar{\mathcal{E}}^1_N(\bd w_{K^c})\equiv N\Tr_{N-1}\left[\xxbar{\mathcal{E}}^N(\bd w_{K^c})\right]
\end{equation}
is the set of 1RDMs compatible with a linear operator in $\xxbar{\mathcal{E}}^N(\bd w_{K^c})$.

By construction, the set $\widetilde{\mathcal{E}}^1_N(\bd w, \mathcal{K}_K)$ \eqref{eq:tilde-e1NK} satisfies the inclusion relation
\begin{equation}\label{eq:incl-E1sets}
\xxbar{\mathcal{E}}^1_N(\bd w,\mathcal{K}_K)\subseteq \widetilde{\mathcal{E}}^1_N(\bd w,\mathcal{K}_K)\,,
\end{equation}
as we merely relaxed the condition on 1RDMs $\gamma$ to be in the set $\xxbar{\mathcal{E}}^1_N(\bd w,\mathcal{K}_K)$ compared to $\widetilde{\mathcal{E}}^1_N(\bd w,\mathcal{K}_K)$. Furthermore, the set $\widetilde{\mathcal{E}}^1_N(\bd w,\mathcal{K}_K)$ is convex, as it follows from the convexity of $\widetilde{\mathcal{E}}^N(\bd w,\mathcal{K}_K)$, which can be readily proven, and the linearity of the partial trace map $N\Tr_{N-1}[\cdot]$.

To illustrate why the set $\widetilde{\mathcal{E}}^1_N(\bd w, \mathcal{K}_K)$ as an outer approximation to $\xxbar{\mathcal{E}}^1_N(\bd w,\mathcal{K}_K)$ is easier to characterize than the original set $\xxbar{\mathcal{E}}^1_N(\bd w, \mathcal{K}_K)$, we consider the case of $r = 2$ with $|K| = 1$, and we fix $\gamma^{(1)}$. The analysis for $\gamma^{(2)}$ fixed follows analogously.
Then, given a 1RDM $\gamma$, we have $\gamma\in \widetilde{\mathcal{E}}^1_N(\bd w, \gamma^{(1)})$ if and only if $\gamma\in \ebw$ and
\begin{equation}
\gamma^{(2)} = \frac{1}{1-w}\left(\gamma-w\gamma^{(1)}\right)\in \mathcal{E}^1_N\,.
\end{equation}
We observe that $\gamma^{(2)}$ is not well-defined, as $w \to 1$, consistent with $\bd w_0 = (1, 0, \ldots)$ representing an ensemble containing only ground state information and no excited state contributions. Both memberships $\gamma \in \ebw$ and $\gamma^{(2)} \in \mathcal{E}^1_N$ can be verified solely by evaluating spectral constraints, as detailed in Refs.~\cite{SP21, LCLS21, CLLPPS23}.
Thus,  for $r=2$, we obtain a practical approximation of the more intricate set $\xxbar{\mathcal{E}}^1_N(\bd w, \gamma^{(1)})$.

For general $r \geq 3$, however, determining a feasible characterization of the set $\widetilde{\mathcal{E}}^1_N(\bd w,\mathcal{K}_K)$ remains challenging. To obtain a set of 1RDMs defined solely through spectral constraints, we must discard information about the natural orbitals of the fixed 1RDMs in $\mathcal{K}_K$, retaining only their natural occupation number vectors $\bd\lambda^{(i)} = \pi(\mathrm{spec}^\downarrow(\gamma^{(i)})), \pi\in\mathcal{S}_d$, where $d=\mathrm{dim}(\mathcal{H}_1)$.

\subsection{Discarding information about natural orbitals\label{sec:natorb}}

In this section, we show that discarding information about the natural orbital bases of $\gamma^{(i)} =(\mathcal{K}_K)_{\mu(i)}$ allows for a complete characterization of the accordingly relaxed set $\widetilde{\mathcal{E}}^1_N(\bd w, \mathcal{K}_K)$ \eqref{eq:tilde-e1NK} by combining the solution to Horn's problem \cite{LP03, Knutson} with the $\bd w$-ensemble one-body $N$-representability constraints \cite{SP21, LCLS21, CLLPPS23}.

Neglecting the information about the natural orbitals of the 1RDMs in $\mathcal{K}_K$ and retaining only their spectral information implies considering the set
\begin{equation}\label{eq:setNONKk}
\mathcal{L}_K \equiv \Big(\bd\lambda^{(i)}\Big)_{i\in K}
\end{equation}
of fixed natural occupation number vectors $\bd\lambda^{(i)}$, similarly to $\mathcal{K}_K$ in Eq.~\eqref{eq:setKk}.
Using $\mathcal{L}_K$ \eqref{eq:setNONKk}, the set of admissible 1RDMs under this constraint is given by
\begin{equation}\label{eq:E1NL}
\widetilde{\mathcal{E}}^{1}_N(\bd w, \mathcal{L}_K)\!\equiv \!\bigcup_{u_1, ..., u_{|K|}\in \mathrm{U}(\mathcal{H}_1)} \widetilde{\mathcal{E}}^1_N\left(\bd w, \Big(u_iD(\bd\lambda^{(i)})u_i^\dagger\Big)_{i\in K}\right),
\end{equation}
where $\mathrm{U}(\mathcal{H}_1)$ is the group of unitary operators on $\mathcal{H}_1$, and $D(\bd\lambda^{(i)})$ denotes the diagonal matrix with matrix elements $D_{ij} = \delta_{ij}\lambda^{(i)}_j$. In particular, $\widetilde{\mathcal{E}}^{1}_N(\bd w, \mathcal{L}_K)$ satisfies the inclusion relation (recall Eq.~\eqref{eq:incl-E1sets})
\begin{equation}
   \xxbar{\mathcal{E}}^1_N(\bd w,\mathcal{K}_K)\subseteq \widetilde{\mathcal{E}}^1_N(\bd w,\mathcal{K}_K)\subseteq \widetilde{\mathcal{E}}^{1}_N(\bd w, \mathcal{L}_K)\,.
\end{equation}

The set $\widetilde{\mathcal{E}}^{1}_N(\bd w, \mathcal{L}_K)$ is eventually invariant under unitary transformations on the one-particle Hilbert space, meaning that for all 1RDMs $\gamma$ and one-particle unitaries $u:\mathcal{H}_1\to\mathcal{H}_1$, $\gamma\in\widetilde{\mathcal{E}}^{1}_N(\bd w, \mathcal{L}_K)$ implies that $u\gamma u^\dagger\in \widetilde{\mathcal{E}}^{1}_N(\bd w, \mathcal{L}_K)$. This follows directly from the fact that unitary transformations at the one-particle level preserve the spectrum of a 1RDM and that $\mathcal{E}^1_N$ is also unitarily invariant.
Consequently, the set $\widetilde{\mathcal{E}}^{1}_N(\bd w, \mathcal{L}_K)$ in Eq.~\eqref{eq:E1NL} is solely characterized by spectral constraints and we denote the corresponding set of admissible spectra by
\begin{align}\label{eq:Lambda-Kk}
\Lambda(\bd w, \mathcal{L}_K)  \equiv \Big\{\bd\lambda\in \RR^d\,|\,&\exists\gamma\in \widetilde{\mathcal{E}}^1_N(\bd w, \mathcal{L}_K), \pi\in\mathcal{S}_d: \nonumber\\
&\bd\lambda=\pi(\mathrm{spec}^\downarrow(\gamma))\Big\}\,.
\end{align}
Here, we use $\Lambda$ instead of $\Sigma$ (c.f., Eq.~\eqref{eq:Sigma-w}) to emphasise that $\Lambda(\bd w, \mathcal{L}_K) $ is in general \textit{not} a convex polytope for generic $\bd\lambda^{(i)}=(\mathcal{L}_K)_{\mu(i)}, i\in\{1, ..., |K|\}$. This can be understood by analogy with the moment polytope described by the generalized Pauli constraints, which is, in general, a convex polytope only for decreasingly ordered natural occupation numbers, but not for the full orbit of this polytope under the action of the symmetric group \cite{Klyachko2004, KL06, AK08}.

It is important to note that the mapping from 1RDMs to their vectors of eigenvalues is inherently non-linear. As a result, the spectrum of an 1RDM $\gamma \in \xxbar{\mathcal{E}}^1_N(\bd w,\mathcal{K}_K)$ cannot, in general, be expressed as a convex combination of the form $\pi(\mathrm{spec}^\downarrow(\gamma)) = \sum_{i=1}^r w_i \bd\lambda^{(i)}, \pi\in\mathcal{S}_d$. This non-linearity is a key feature of the spectral problem and highlights the relevance of the generalized Horn's problem \cite{Knutson, Bhatia, LP03} in this context. In particular, it motivates the connection of Horn's problem to our refined formulation of the one-body $N$-representability problem, which will be introduced in the next section as a framework for characterizing the set $\Lambda(\bd w, \mathcal{L}_K)$.

\subsubsection{Generalized Horn problem\label{sec:PSP}}

The generalized Horn problem, which deals with the sum of principal submatrices of a Hermitian matrix \cite{LP03}, extends the well-known Horn problem \cite{K98, KT99, Knutson, Bhatia} beyond the sum of two Hermitian matrices. Thus, Horn's problem corresponds to the case of $|K|=1$ fixed natural occupation number vectors (see Sec.~\ref{sec:recap}) and will be used to derive stricter bounds on the residuals of the $\bd w$-ensemble $N$-representability constraints in Sec.~\ref{sec:residuals}, where the residuals are defined as the difference between the right-hand side and the left-hand side of the $\bd w$-ensemble $N$-representability constraints. Additionally, the generalized Horn problem, with the original Horn problem as a special case, generalizes to settings with an arbitrary cardinality $|K|$ of the set $\mathcal{L}_K$ defined in Eq.~\eqref{eq:setNONKk}.

The generalized Horn problem raises the following question: Given $(m+1)$ decreasingly ordered vectors $\bd y^\downarrow, \bd x^{(i)\downarrow} \in \RR^d, i\in\{1, ..., m\}$, do there exist $(m+1)$ Hermitian matrices $ Y, X^{(i)} \in \RR^{d\times d}$ such that $Y = \sum_{i=1}^m X^{(i)}$?
Here, $\bd y^\downarrow = \mathrm{spec}^\downarrow(Y)$ and $\bd x^{(i)\downarrow} = \mathrm{spec}^\downarrow(X^{(i)})$.
Due to the convexity properties of moment polytopes (e.g., see Refs.~\cite{Kirwan, Guillemin-book}) the set of admissible $\bd y^\downarrow$ for some fixed $\bd x^{(i)\downarrow}, i\in\{1, ..., m\}$ is a convex polytope \cite{K98, KT99, Knutson, LP03}.
We denote this set of admissible $\bd y^\downarrow$ by $\Sigma_{\mathrm{H}}^\downarrow(\{\bd x^{(i)\downarrow}\}_{i=1}^m)$, where the subscript $\mathrm{H}$ stands for Horn's problem. As $\Sigma_{\mathrm{H}}^\downarrow(\{\bd x^{(i)\downarrow}\}_{i=1}^m)$ is a convex polytope, it is characterized by the intersection of finitely many hyperplanes. The corresponding halfspaces describe linear constraints on $\bd y^\downarrow$, which strongly depend on $d$ and are highly non-trivial already for the case of three matrices $Y, X^{(1)}, X^{(2)}$ as in the original Horn's problem. For a comprehensive derivation and discussion of the resulting linear constraints, we refer the reader to Refs.~\cite{K98, KT99, Knutson, Bhatia, LP03}.

Due to the Hermiticity of the 1RDM, we observe that the convex combination of 1RDM's
\begin{equation}\label{eq:ge}
\gamma = \sum_{i\in K}w_i \gamma^{(i)} +\widetilde{\gamma}
\end{equation}
and the individual terms in the above summation relate to the vectors $\bd y, \bd x^{(i)}$ in Horn's problem and its generalization according to $m=|K|+1$, $Y\equiv \gamma, X^{(i)}=w_{\mu^{-1}(i)}(\mathcal{L}_K)_{\mu(i)}$ for all $i\in \{1, ..., |K|\}$ and $X^{(|K|+1)} = \widetilde{\gamma}$, where only $\bd x^{(i)^\downarrow} \equiv w_i \bd\lambda^{(i)\downarrow}$ are known a priori.
This implies to consider the union of the set $\Sigma^\downarrow_\mathrm{H}(\{w_i\bd\lambda^{(i)\downarrow}\}_{i=1}^r)$, where $\bd\lambda^{(i)}$ are fixed for all $i\in K$, over all admissible $\bd\lambda^{(j)}\in \Sigma(\bd w_0)\forall \, j\in K^c$ according to
\begin{equation}\label{eq:union}
\Lambda_H^\downarrow(\bd w, \mathcal{L}_K) \equiv \bigcup_{\substack{\bd  x^{(|K|+1)\downarrow}\in \Sigma^\downarrow(\bd w_{K^c})}}\Sigma^\downarrow_\mathrm{H}(\{\bd x^{(i)\downarrow}\}_{i=1}^{|K|+1})\,,
\end{equation}
where $\bd w_0 = (1, 0, ...)$ and we defined the set
\begin{equation}\label{eq:TKc}
    \Sigma^\downarrow(\bd w_{K^c})\equiv \Big\{\bd\lambda\in \RR^d\,\Big\vert\,\exists\gamma\in \xxbar{\mathcal{E}}^1_N(\bd w_{K^c}): \bd\lambda = \mathrm{spec}^\downarrow(\gamma)\Big\}\,.
\end{equation}
We use the understanding and characterization of the set $\Lambda_H^\downarrow(\bd w, \mathcal{L}_K)$ in Eq.~\eqref{eq:union} in the following section to eventually characterize $\Lambda(\bd w, \mathcal{L}_K) $.

\subsubsection{Characterisation of $\Lambda(\bd w, \mathcal{L}_K) $\label{sec:Lambda}}

In this section, we use the solution to Horn's problem and its generalization for sums of $m$ Hermitian matrices, as discussed in Refs.~\cite{K98, KT99, Knutson, Bhatia, LP03}, to fully characterize $\Lambda(\bd w, \mathcal{L}_K)$ for generic $r$ and cardinality $|K|$ of the index set $K$. The first key result is the following lemma:
\begin{lem}\label{lem:1}
The set
\begin{equation}\label{eq:L-dec}
\Lambda^\downarrow(\bd w, \mathcal{L}_K) \equiv \left\{\bd\lambda^\downarrow\,|\,\bd\lambda\in \Lambda(\bd w, \mathcal{L}_K) \right\}
\end{equation}
of decreasingly ordered natural occupation number vectors $\bd\lambda$ is given by
\begin{equation}\label{eq:inter}
\Lambda^\downarrow(\bd w, \mathcal{L}_K) = \Lambda_H^\downarrow(\bd w, \mathcal{L}_K)\cap\Sigma(\bd w)\,.
\end{equation}
\end{lem}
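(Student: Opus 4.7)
The plan is to establish the claimed equality by verifying both inclusions separately.

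For $(\subseteq)$, take $\bd\lambda^\downarrow \in \Lambda^\downarrow(\bd w, \mathcal{L}_K)$, so that $\bd\lambda^\downarrow = \mathrm{spec}^\downarrow(\gamma)$ for some $\gamma \in \widetilde{\mathcal{E}}^1_N(\bd w, \mathcal{L}_K)$. By Eq.~\eqref{eq:E1NL} combined with the explicit characterization in Eq.~\eqref{eq:tilde-e1NK}, $\gamma$ admits a decomposition $\gamma = \sum_{i\in K} w_i \gamma^{(i)} + \widetilde{\gamma}$ with $\mathrm{spec}^\downarrow(\gamma^{(i)}) = \bd\lambda^{(i)\downarrow}$ and $\widetilde{\gamma} \in \xxbar{\mathcal{E}}^1_N(\bd w_{K^c})$, so that $\mathrm{spec}^\downarrow(\widetilde{\gamma}) \in \Sigma^\downarrow(\bd w_{K^c})$. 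Reading this identity as an instance of the generalized Horn problem for $|K|+1$ Hermitian summands with known spectra $\{w_i \bd\lambda^{(i)\downarrow}\}_{i\in K}$ and one additional spectrum lying in $\Sigma^\downarrow(\bd w_{K^c})$, one obtains $\bd\lambda^\downarrow \in \Sigma_H^\downarrow(\{w_i \bd\lambda^{(i)\downarrow}\}_{i\in K}, \mathrm{spec}^\downarrow(\widetilde{\gamma}))$. Taking the union over admissible spectra of $\widetilde{\gamma}$ then places $\bd\lambda^\downarrow$ in $\Lambda_H^\downarrow(\bd w, \mathcal{L}_K)$. The additional requirement $\bd\lambda^\downarrow \in \Sigma(\bd w)$ is equivalent to $\gamma \in \ebw$, which I would derive by exhibiting, starting from the present decomposition, an $N$-fermion state $\Gamma' \in \Ebw$ realizing the same 1RDM via an Uhlmann-type purification argument.

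For $(\supseteq)$, take $\bd\lambda^\downarrow \in \Lambda_H^\downarrow(\bd w, \mathcal{L}_K) \cap \Sigma(\bd w)$. The generalized Horn polytope description of $\Lambda_H^\downarrow$ in Eq.~\eqref{eq:union} supplies Hermitian matrices $\gamma^{(i)}$ ($i\in K$) with spectra $\bd\lambda^{(i)\downarrow}$, together with $\widetilde{\gamma}$ whose spectrum lies in $\Sigma^\downarrow(\bd w_{K^c})$, such that the sum $\gamma := \sum_{i\in K} w_i \gamma^{(i)} + \widetilde{\gamma}$ has spectrum $\bd\lambda^\downarrow$. Since $\xxbar{\mathcal{E}}^1_N(\bd w_{K^c})$ is invariant under one-particle unitaries, membership there is determined by the spectrum alone, so $\widetilde{\gamma} \in \xxbar{\mathcal{E}}^1_N(\bd w_{K^c})$ is automatic. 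Therefore $\gamma$ has the structure required by Eq.~\eqref{eq:tilde-e1NK}, and by Eq.~\eqref{eq:E1NL} this places $\gamma \in \widetilde{\mathcal{E}}^1_N(\bd w,\mathcal{L}_K)$, whence $\bd\lambda^\downarrow \in \Lambda^\downarrow(\bd w,\mathcal{L}_K)$.

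The main obstacle lies in the $\Sigma(\bd w)$ part of the forward inclusion. The natural $N$-fermion representative $\Gamma = \sum_{i\in K} w_i |\Psi_i\rangle\!\langle\Psi_i| + \widetilde{\Gamma}$ assembled from the decomposition lies in $\widetilde{\mathcal{E}}^N(\bd w,\mathcal{L}_K)$ but generically \emph{not} in $\Ebw$, since the purifications $|\Psi_i\rangle$ of the fixed 1RDMs $\gamma^{(i)}$ are not required to be orthogonal to the states composing $\widetilde{\Gamma}$. Producing a different $\Gamma' \in \Ebw$ satisfying $N\Tr_{N-1}[\Gamma'] = \gamma$ will require exploiting the freedom in the choice of purifications of the $\gamma^{(i)}$ in order to restore the orthonormality demanded by the $\bd w$-ensemble structure; this technical step is the most delicate part of the argument.
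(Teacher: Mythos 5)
Your overall skeleton---decompose $\gamma=\sum_{i\in K}w_i\gamma^{(i)}+\widetilde\gamma$, read the sum as an instance of the generalized Horn problem, and intersect with $\Sigma(\bd w)$---is the same route the paper takes (its own argument is only a brief sketch, so your reverse inclusion and the Horn half of the forward inclusion are actually spelled out more fully than in the text). The genuine gap is precisely the step you flag at the end: establishing $\gamma\in\ebw$, i.e.\ $\bd\lambda^\downarrow\in\Sigma(\bd w)$, in the forward inclusion. The repair you propose---exploiting the Uhlmann freedom in the purifications of the $\gamma^{(i)}$ to produce some $\Gamma'\in\Ebw$ with the same 1RDM---cannot succeed. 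If membership in $\widetilde{\mathcal{E}}^1_N(\bd w,\mathcal{L}_K)$ were determined by the decomposition \eqref{eq:tilde-e1NK} alone, the set would contain 1RDMs whose spectra lie outside $\Sigma(\bd w)$ for \emph{every} choice of purification: for $r=2$, $K=\{1\}$, take both the fixed and the non-fixed state equal to a single Slater determinant $|\Phi\rangle$ (the orthonormality retained in \eqref{eq:EbNwKc} only acts within $K^c$ and is vacuous here). Then $\gamma$ is the Hartree--Fock 1RDM, $\sum_{i=1}^N\lambda_i^\downarrow=N>N-1+w_1$, which violates the $\bd w$-dependent constraint defining $\Sigma(\bd w)$; since $\ebw$ is characterized exactly by $\Sigma(\bd w)$, no $\Gamma'\in\Ebw$ reproducing this $\gamma$ exists, and no choice of purifications changes $\gamma$ itself.

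The correct resolution is definitional rather than constructive: the relaxation leading to $\widetilde{\mathcal{E}}^N(\bd w,\mathcal{K}_K)$ drops orthogonality only between the fixed states and the remainder, while the requirement that the total 1RDM remain $\bd w$-ensemble representable, $\gamma\in\ebw$, is kept as part of the characterization of the relaxed set. This is exactly how the paper phrases the $r=2$ case (``$\gamma\in\widetilde{\mathcal{E}}^1_N(\bd w,\gamma^{(1)})$ if and only if $\gamma\in\ebw$ and $\gamma^{(2)}\in\mathcal{E}^1_N$''), and it is the only reason the intersection with $\Sigma(\bd w)$ appears on the right-hand side of \eqref{eq:inter} at all. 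Under that reading the $\Sigma(\bd w)$ membership in the forward inclusion is immediate from \eqref{eq:Sigma-w}, not ``the most delicate part''; under your literal reading the inclusion you are trying to prove is false. Relatedly, in the reverse inclusion you never use the hypothesis $\bd\lambda^\downarrow\in\Sigma(\bd w)$, but with the corrected definition it is needed there as well, precisely to certify that the reassembled $\gamma$ lies in $\ebw$; the rest of your reverse argument (unitary invariance of $\xxbar{\mathcal{E}}^1_N(\bd w_{K^c})$, existence of Hermitian realizations from the Horn polytope) is sound.
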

In fact, every 1RDM $\gamma\in \widetilde{\mathcal{E}}^1_N(\bd w, \mathcal{L}_K)$ \eqref{eq:E1NL} can be written as a convex combination
\begin{equation}\label{eq:gtmp}
\gamma = \sum_{i\in K}w_i\gamma^{(i)} +\widetilde{\gamma}
\end{equation}
for some $\widetilde{\gamma}\in \xxbar{\mathcal{E}}^1_N(\bd w_{K^c})$. Together with Coleman's ensemble $N$-representability constraints \cite{C63} and the relaxed one-body $\bd w$-ensemble $N$-representability constraints characterizing $\Sigma(\bd w)$ \cite{SP21, LCLS21, CLLPPS23}, this yields the right-hand side of Eq.~\eqref{eq:inter}.

The union of convex sets, as in the definition of $\Lambda_H^\downarrow(\bd w, \mathcal{L}_K)$ in Eq.~\eqref{eq:union}, and its intersection with a convex set, is generally not convex.
Thus, it is essential to demonstrate that the set $\Lambda^\downarrow(\bd w, \mathcal{L}_K)$ indeed satisfies:
\begin{thm}\label{thm:1}
The set $\Lambda^\downarrow(\bd w, \mathcal{L}_K) $ is a convex polytope.
\end{thm}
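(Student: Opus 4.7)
The plan is to establish convexity by lifting to an auxiliary product space and exploiting that the generalized Horn inequalities are jointly linear in all input and output spectra once the Littlewood--Richardson indexing data is fixed. The apparent non-convexity of the union in Eq.~\eqref{eq:union} is only due to projecting out an auxiliary spectrum, and projections of convex polytopes remain convex polytopes. Combined with Lemma~\ref{lem:1} and the fact that $\Sigma(\bd w)$ is itself a convex polytope (Refs.~\cite{SP21, LCLS21, CLLPPS23}), this yields the claim.

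Concretely, I would introduce the auxiliary set
\begin{equation*}
\mathcal{B} \equiv \Big\{(\bd\lambda^\downarrow, \bd x^\downarrow) \in \RR^d\!\times\!\RR^d \,\Big\vert\, \bd x^\downarrow \in \Sigma^\downarrow(\bd w_{K^c}),\ \bd\lambda^\downarrow \in \Sigma^\downarrow_\mathrm{H}\big(\{w_i\bd\lambda^{(i)\downarrow}\}_{i\in K}, \bd x^\downarrow\big)\Big\}\,.
\end{equation*}
The set $\Sigma^\downarrow(\bd w_{K^c})$ in Eq.~\eqref{eq:TKc} is a convex polytope, being the decreasingly ordered section of the spectral polytope of $\xxbar{\mathcal{E}}^1_N(\bd w_{K^c})$ established in Refs.~\cite{SP21, LCLS21, CLLPPS23}. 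For every admissible Horn indexing tuple, the corresponding inequality takes the schematic form $\sum_{k\in I_0}\lambda^\downarrow_k \le \sum_{j\in I_{|K|+1}}x^\downarrow_j + \sum_{i\in K}\sum_{j\in I_i}w_i\lambda^{(i)\downarrow}_j$, which is manifestly linear in the joint variables $(\bd\lambda^\downarrow, \bd x^\downarrow)$ once the fixed data $w_i\bd\lambda^{(i)\downarrow}$, $i\in K$, are treated as constants. Intersecting the finitely many such half-spaces with those defining $\Sigma^\downarrow(\bd w_{K^c})$ and with the chamber inequalities $\lambda^\downarrow_1\ge\cdots\ge\lambda^\downarrow_d$ shows that $\mathcal{B}$ is a convex polytope in $\RR^d\!\times\!\RR^d$. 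A point $\bd\lambda^\downarrow$ lies in the projection of $\mathcal{B}$ onto its first factor precisely when some $\bd x^\downarrow\in\Sigma^\downarrow(\bd w_{K^c})$ exists with $\bd\lambda^\downarrow\in\Sigma^\downarrow_\mathrm{H}$, which is exactly the defining condition of the union in Eq.~\eqref{eq:union}. Since projections of convex polytopes are convex polytopes (Fourier--Motzkin elimination), $\Lambda_H^\downarrow(\bd w, \mathcal{L}_K)$ is a convex polytope, and intersecting with the convex polytope $\Sigma(\bd w)$ via Lemma~\ref{lem:1} yields the theorem.

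The main obstacle is verifying that every facet inequality of the generalized Horn polytope $\Sigma^\downarrow_\mathrm{H}$ is jointly linear in all input spectra \emph{simultaneously}, rather than merely in the target spectrum for fixed inputs. This is a structural feature of the generalized Horn problem rather than a genuine computation, and is implicit in the recursive description of its defining inequalities in Refs.~\cite{K98, KT99, Knutson, Bhatia, LP03}, but it is what makes the auxiliary-polytope construction work. Once this is granted, the remainder of the argument is elementary convex geometry, and in particular no analogue of Kirwan's moment-polytope theorem is required.
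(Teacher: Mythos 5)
Your proposal is correct and rests on the same key observation as the paper's proof: the generalized Horn inequalities are jointly linear in the target spectrum $\bd\lambda^\downarrow$ and the free input spectrum $\widetilde{\bd\lambda}^\downarrow$ (the fixed $w_i\bd\lambda^{(i)\downarrow}$ entering only as constants), so that convexity of $\Sigma(\bd w_{K^c})$ carries over to the union $\Lambda_H^\downarrow(\bd w, \mathcal{L}_K)$, after which intersecting with the polytope $\Sigma(\bd w)$ finishes the argument via Lemma~\ref{lem:1}. The paper extracts this by a direct two-point convex-combination check, whereas you package it as a lifted polytope in $\RR^d\times\RR^d$ followed by projection; the two are logically equivalent, but your Fourier--Motzkin formulation has the minor advantage of delivering the \emph{polytope} property (finitely many facets) of $\Lambda_H^\downarrow(\bd w, \mathcal{L}_K)$ explicitly, a point the paper asserts rather than derives. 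Your closing caveat about joint linearity of the facet inequalities is exactly the hypothesis the paper also invokes, in the form $A\cdot\bd\lambda^\downarrow\leq C\cdot\widetilde{\bd\lambda}^\downarrow+\sum_{i\in K}w_iB_i\cdot\bd\lambda^{(i)\downarrow}$ with coefficient matrices taken from Ref.~\cite{LP03}.
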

\begin{proof}
We first prove that $\Lambda_H^\downarrow(\bd w, \mathcal{L}_K)$ as defined in Eq.~\eqref{eq:union} is convex. Since $\Sigma^\downarrow_\mathrm{H}(\{\bd x^{(i)\downarrow}\}_{i=1}^{|K|+1})$ is a convex polytope for any choice of $\bd x^{(|K|+1)\downarrow}\equiv\widetilde{\bd\lambda}^\downarrow\in \Sigma^\downarrow(\bd w_{K^c})$, it is characterized by
\begin{equation}
A \cdot\bd \lambda^\downarrow \leq C\cdot \widetilde{\bd\lambda}^\downarrow+ \sum_{i\in K}w_iB_i\cdot\bd\lambda^{(i)\downarrow}\,,
\end{equation}
where $A, B_i$ and $C$ are coefficient matrices.
The explicit $A, B_i,C$ are dependent on $r, |K|, d$ and follow from the solution to the principal submatrix problem for Hermitian matrices in Ref.~\cite{LP03}.

Then, for any pair of $\bd\lambda, \bd\lambda^\prime\in \Lambda_H^\downarrow(\bd w, \mathcal{L}_K) $ there exist respective $\widetilde{\bd\lambda}^\downarrow, \widetilde{\bd\lambda}^{\prime\downarrow}\in \Sigma(\bd w_{K^c})$ such that $\bd\lambda, \bd\lambda^\prime$ are in the respective polytope $\Sigma^\downarrow_\mathrm{H}(\{\bd x^{(i)\downarrow}\}_{i=1}^{|K|+1})$ (recall Eq.~\eqref{eq:union}). The convexity of the set $\Sigma(\bd w_{K^c})$ in Eq.~\eqref{eq:TKc} then implies that for any $\alpha\in [0,1]$,
\begin{equation}
\alpha \bd\lambda+(1-\alpha)\bd\lambda^\prime\in \Lambda_H^\downarrow(\bd w, \mathcal{L}_K)\,.
\end{equation}
Thus, $\Lambda_H^\downarrow(\bd w, \mathcal{L}_K)$ is convex.
Since the spectral polytope $\Sigma(\bd w)$ in Eq.~\eqref{eq:Sigma-w} is convex, also its intersection with the set $\Lambda_H^\downarrow(\bd w, \mathcal{L}_K)$ is convex. As both $\Sigma(\bd w)$ and $\Lambda_H^\downarrow(\bd w, \mathcal{L}_K)$ are not only convex but convex polytopes, also $\Lambda^\downarrow(\bd w, \mathcal{L}_K)$ is a convex polytope.
\end{proof}

\begin{figure}[tb]
\includegraphics[width=0.5\linewidth]{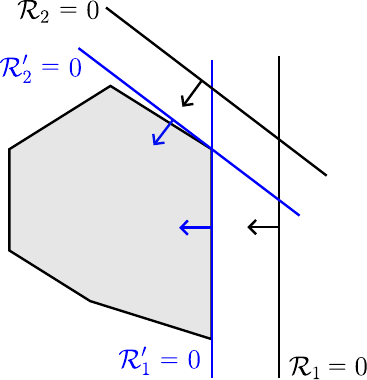}
\caption{Schematic illustration of the tightening of $\bd w$-ensemble $N$-representability constraints for two inequalities in the hyperplane representation of $\Sigma(\bd w)$. The vanishing residuals $R_j = 0$ (black) describing the two hyperplanes are shifted until they touch the boundary of the gray polytope leading to $R^\prime_j = 0$ (blue). (See text for more explanations.)
 \label{fig:residual-schematic}}
\end{figure}

Therefore, the characterization of the $\Lambda^\downarrow(\bd w, \mathcal{L}_K)$ in Eq.~\eqref{eq:Lambda-Kk} relies on the solution to the generalized Horn problem discussed in Sec.~\ref{sec:PSP}. These constraints, provided in Ref.~\cite{LP03}, apply to a general number $|K|+1$ of matrices in the weighted sums on the right-hand side of Eq.~\eqref{eq:gtmp}.

By Lemma \ref{lem:1} and Theorem \ref{thm:1}, a (generally non-minimal) hyperplane representation of $\Lambda^\downarrow(\bd w, \mathcal{L}_K)$ can be constructed by combining the hyperplane representations of $\Lambda_H^\downarrow(\bd w, \mathcal{L}_K)$ and $\Sigma(\bd w)$ \cite{SP21, LCLS21, CLLPPS23}. This combined representation also fully characterizes the non-convex set $\Lambda(\bd w, \mathcal{L}_K)$ defined in Eq.~\eqref{eq:Lambda-Kk}. Here, by combining two hyperplane representations, we simply mean to consider the collection of the inequalities arising from $\Lambda_H^\downarrow(\bd w, \mathcal{L}_K)$ and $\Sigma(\bd w)$. In general, this hyperplane representation of $\Lambda^\downarrow(\bd w, \mathcal{L}_K)$ will contain redundant inequalities, i.e., it will not be minimal, and a minimal hyperplane representation can be constructed.

\begin{figure*}[tb]
\includegraphics[width=\linewidth]{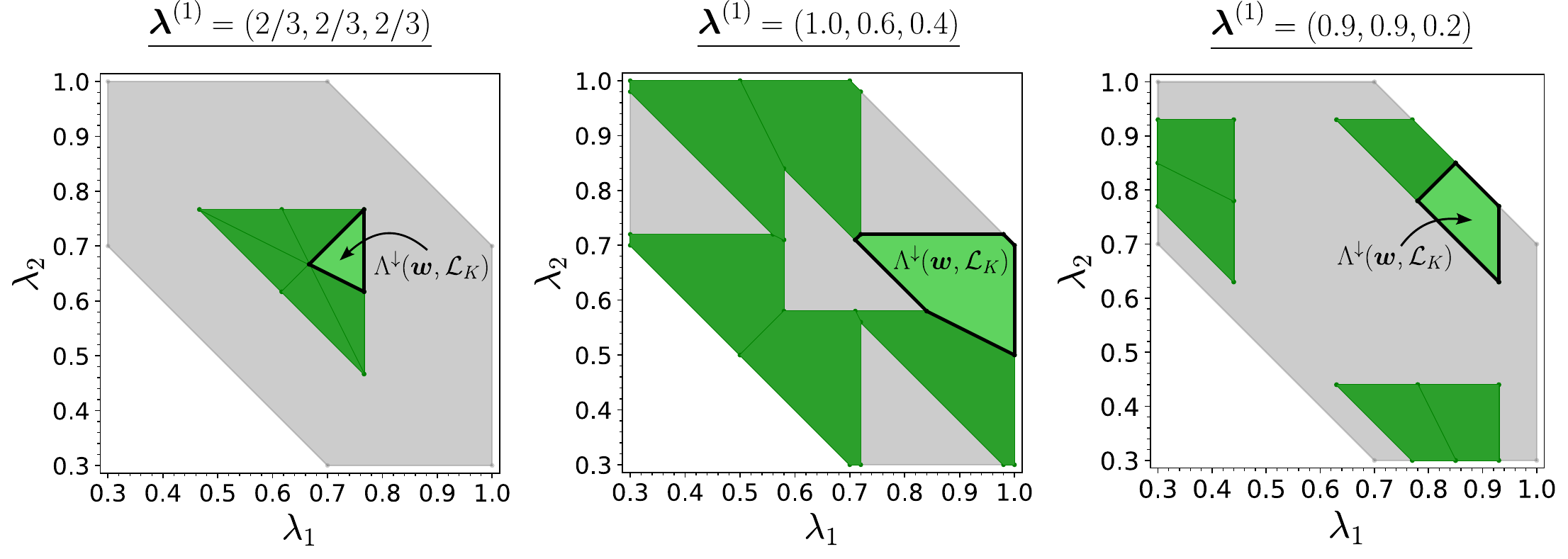}
\caption{Illustration of the non-convex polytope $\Lambda(\bd w, \bd\lambda^{(1)})$ (green) and their dependence on $\bd\lambda^{(1)}$ for $N=2, d=3$, $\bd w=(0.7,0.3,0, ...)$ and three different choices of $\bd\lambda^{(1)}$. The set $\Lambda(\bd w, \bd\lambda^{(1)})$ follows from $\Lambda^\downarrow(\bd w, \bd\lambda^{(1)})$ (light green) by considering all possible permutations of $\bd\lambda^\downarrow$. The spectral polytope $\Sigma(\bd w)$ is shown in gray and green. The value of $\lambda_3 = 2-\lambda_1-\lambda_2$ is fixed by the $l^1$-norm of $\bd\lambda$. (See text for more explanations.)
 \label{fig:Horn}}
\end{figure*}

Furthermore, if an inequality from the hyperplane representation of $\Sigma^\downarrow(\bd w)$ is not facet-defining for $\Lambda^\downarrow(\bd w, \mathcal{L}_K)$, it can be tightened by shifting it until it becomes tangent to the boundary of $\Lambda^\downarrow(\bd w, \mathcal{L}_K)$. Mathematically, this process involves adjusting the right-hand side of the inequality $\bd a^\mathrm{T}\cdot\bd\lambda^\downarrow\leq b$, where $\bd a$ is the normal vector of the corresponding hyperplane $\mathcal{R}_i \equiv b- \bd a^\mathrm{T}\cdot\bd\lambda^\downarrow =0$ and $b\in \RR$. The quantity $\mathcal{R}_i\geq 0$, representing the difference between the right-hand side and left-hand side of the inequality, is commonly referred to as the residual.

This situation is illustrated in Fig.~\ref{fig:residual-schematic}, where $\mathcal{R}_1$ and $\mathcal{R}_2$ represent two example residuals of the $\bd w$-ensemble $N$-representability constraints, expressed as $\mathcal{R}_i = b_i-\bd a_i^\mathrm{T} \cdot \bd \lambda^\downarrow  \geq 0$ (see also Sec.~\ref{sec:residuals} for examples). In contrast, the tighter residual $\mathcal{R}_1^\prime$ and its corresponding hyperplane (shown in blue), which is part of a hyperplane representation of $\Lambda_H^\downarrow(\bd w, \mathcal{L}_K)$, is facet-defining, i.e., it is tight for the convex polytope $\Lambda^\downarrow(\bd w, \mathcal{L}_K)$ shown in gray. Similarly, the hyperplane described by $R_2=0$ is shifted until it touches the boundary of $\Lambda^\downarrow(\bd w, \mathcal{L}_K)$.
In Sec.~\ref{sec:residuals}, we will make use of this tightening to derive more restrictive bounds on the residuals of the $\bd w$-ensemble $N$-representability constraints.

Furthermore, since both $\Sigma(\bd w_{K^c})$ \eqref{eq:TKc} and $\Sigma^\downarrow_\mathrm{H}(\{\bd x^{(i)\downarrow}\}_{i=1}^{|K|+1})$ in Eq.~\eqref{eq:union} are convex polytopes, the union over $\bd x^{(|K|+1)\downarrow}\equiv \widetilde{\bd\lambda}^\downarrow\in \Sigma(\bd w_{K^c})$ follows directly from a straightforward calculation.
To highlight this key aspect and illustrate the calculation of $\Lambda^\downarrow(\bd w, \mathcal{L}_K)$, we consider the example of $r=2$ nonzero weights $w_i$, $N=2$, and $d=3$. The sets $\Lambda(\bd w, \mathcal{L}_K)$ (green) and $\Lambda^\downarrow(\bd w, \mathcal{L}_K)$ (light green) for three different choices of $\bd\lambda^{(1)}$ are illustrated in Fig.~\ref{fig:Horn}. The original set $\Sigma(\bd w)$ of admissible $\bd w$-ensemble natural occupation number vectors, without prior knowledge of parts of the ensemble, is shown in grey. Due to the normalization of the 1RDM to the total particle number $N=2$, the third vector entry $\lambda_3 = 2 - \lambda_1 - \lambda_2$ is determined by the other two natural occupation numbers. Moreover, in Appendix \ref{app:cons}, we explicitly derive the hyperplane representation of the convex set $\Lambda^\downarrow(\bd w, \mathcal{L}_K)$, illustrating the application of Lemma \ref{lem:1}.

The in general non-convex  set $\Lambda(\bd w, \mathcal{L}_K)$ follows from taking the orbit of $\Lambda^\downarrow(\bd w, \mathcal{L}_K)$ under the action of the symmetric group $\mathcal{S}_d$, i.e., by considering all possible permutations of the vector entries of any $\bd\lambda\in\Lambda^\downarrow(\bd w, \mathcal{L}_K)$. In particular, Fig.~\ref{fig:Horn} stresses the dependence of the geometrical structure of the set $\Lambda(\bd w, \mathcal{L}_K)$ on the fixed natural occupation number vector $\bd\lambda^{(1)}$: for $\bd\lambda^{(1)}=(2/3, 2/3,2/3)$ (left panel), i.e., the maximally mixed scenario, $\Lambda(\bd w, \mathcal{L}_K)$ is indeed a convex polytope. On the contrary, the center of $\Sigma(\bd w)$ is not contained in $\Lambda(\bd w, \mathcal{L}_K)$ for less mixed $\bd\lambda^{(1)}$, as $\bd\lambda^{(1)}=(1, 0.6, 0.4), (0.9,0.9,0.2)$, in the middle and right panel. Moreover, also the number of facets and, thus, the number of facet-defining inequalities in a minimal hyperplane representation of $\Lambda^\downarrow(\bd w, \mathcal{L}_K)$ depends on $\bd\lambda^{(1)}$.

\subsubsection{Discussion\label{sec:discussion}}

The derivation of the set $\Lambda^\downarrow(\bd w, \mathcal{L}_K)$ for $|K|=1$ fixed natural occupation number vectors corresponds to the original Horn problem, which deals with the sum of two Hermitian matrices with fixed, decreasingly ordered vectors of eigenvalues, as explained in Sec.~\ref{sec:PSP}. Even for $|K|=1$, the number of linear constraints grows significantly with the dimension $d$ of the one-particle Hilbert space $\mathcal{H}_1$. For example, for $d = 2$, there are six constraints, and for $d = 3$, there are already twelve, as discussed in Refs.~\cite{Knutson, Bhatia}. For generic $|K|$, the number of constraints increases rapidly \cite{LP03}. Both $d = 2$ and $d = 3$ are far from typical settings in quantum chemistry or physics, where $d$ represents the dimension of the basis set or the number of lattice sites.

Furthermore, the hyperplane representations derived in Refs.~\cite{K98, KT99, Knutson, LP03} are in general not minimal, meaning that some linear inequalities are redundant. A minimal set of linear inequalities for Horn's problem can be obtained using so-called honeycombs as explained in Ref.~\cite{KTW04}. Moreover, due to the definition of $\Lambda_H^\downarrow(\bd w, \mathcal{L}_K)$, additional constraints are redundant, as we do not consider the generalized Horn problem for every set of Hermitian matrices.
At the same time, this does not resolve the issue of determining the inequalities that solve Horn's problem for generic $d$ in the first place. As discussed in the context of Fig.~\ref{fig:Horn}, the non-redundant linear constraints and, thus, the minimal hyperplane representation depend on $\bd\lambda^{(1)}$.

This leads to the following intermediate conclusion: By Lemma \ref{lem:1} and Theorem \ref{thm:1}, we provide a concrete scheme that can be implemented numerically to derive the linear constraints characterizing the set $\Lambda^\downarrow(\bd w, \mathcal{L}_K)$ and, consequently, $\Lambda(\bd w, \mathcal{L}_K)$. These constraints can be easily checked for a given natural occupation number vector $\bd\lambda$. However, it is the non-convexity of $\Lambda(\bd w, \mathcal{L}_K)$ that gives rise to the system-size dependent constraints from the generalized Horn problem in Ref.~\cite{LP03}, going beyond the relaxed one-body $\bd w$-ensemble $N$-representability constraints, which characterize $\Sigma(\bd w)$.

\begin{figure*}[tb]
\includegraphics[width=\linewidth]{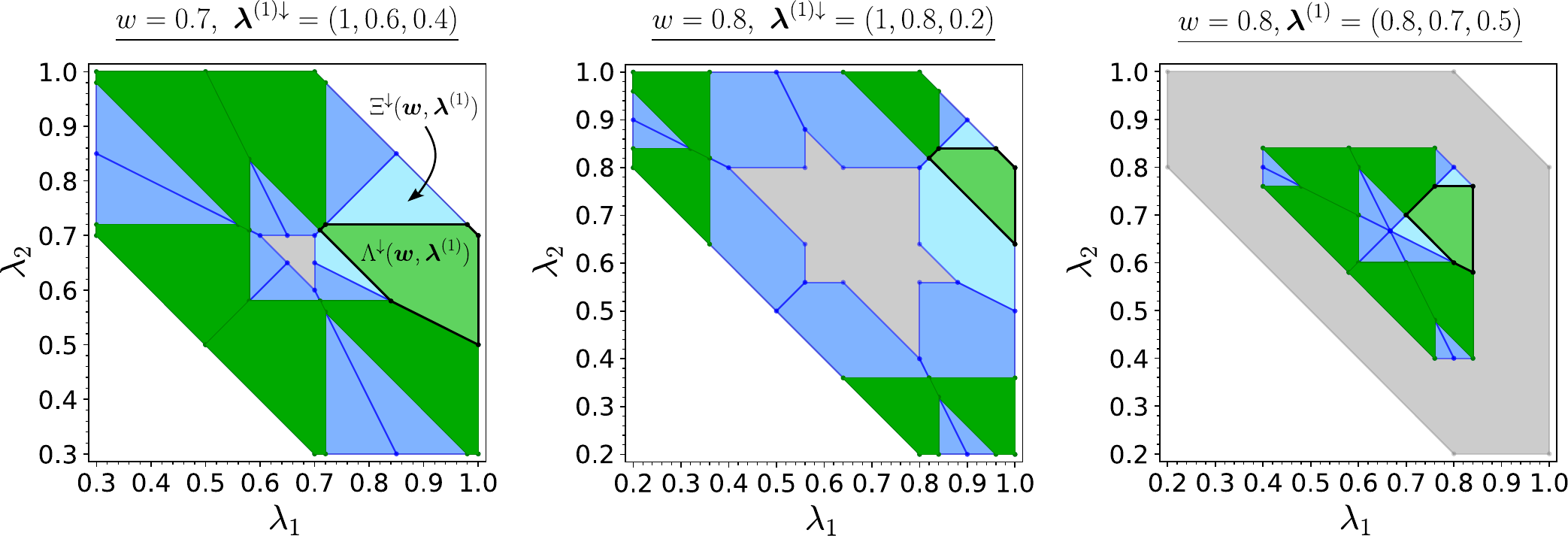}
\caption{Illustration of the spectral polytopes $\Sigma(\bd w, \bd\lambda^{(1)})$ (green) and $\Xi(\bd w, \bd\lambda^{(1)})$ (blue and green) for $r=2$ and three choices of $\bd w, \bd\lambda^{(1)}$. Both $\Lambda(\bd w, \bd\lambda^{(1)})$ and $\Xi(\bd w, \bd\lambda^{(1)})$ are subsets of the larger polytope $\Sigma(\bd w)$ (grey, green and blue), with $\Lambda(\bd w, \bd\lambda^{(1)})$ contained within $\Xi(\bd w, \bd\lambda^{(1)})$.
 \label{fig:residual_polytopes}}
\end{figure*}

\section{Improved bounds on residuals\label{sec:residuals}}

In this section, we derive tighter bounds on the residuals of the $\bd w$-ensemble $N$-representability constraints that define the set $\Sigma(\bd w)$ and present explicit examples of their application to small molecules. To this end, we begin with the simpler case of $r = 2$ non-vanishing weights in the $\bd w$-ensemble in Sec.~\ref{sec:residual-r2}. This serves as a primer for understanding the more general case of arbitrary $r \geq 3$, which is treated in Sec.~\ref{sec:residual-gen}.

\subsection{$\bd w$-ensembles with $r=2$\label{sec:residual-r2}}

The knowledge about $\mathcal{L}_K$ directly impacts the residuals of the inequalities constituting the hyperplane representation of $\Sigma(\bd w)$. Given $\Sigma(\bd w)$, it is always possible to determine a (generally non-minimal) hyperplane representation of $\Lambda^\downarrow(\bd w, \mathcal{L}_K)$ that includes hyperplanes with the same normal vectors as those in $\Sigma(\bd w)$. Thus, $\mathcal{L}_K$ dictates which of the hyperplanes --- from $\Sigma(\bd w)$ or $\Lambda^\downarrow(\bd w, \mathcal{L}_K)$ --- is tight.
Therefore, we obtain more restrictive residuals for the respective relaxed one-body $\bd w$-ensemble $N$-representability constraints. This procedure leads to a non-convex but easy-to-characterize outer approximation of $\Lambda(\bd w, \mathcal{L}_K)$.
In the following, we illustrate various theoretical concepts for $r=2$ as a proof-of-principle example. In fact, $r=2$ is particularly relevant for applications in functional theories, as it allows us to calculate the ground state energy and the ground state gap.

For $r = 2$, the set $\Sigma(\bd{w})$ is defined by Coleman's one-body $N$-representability constraints \cite{C63} (see Eq.~\eqref{eq:w-Nrepr-cons} in Appendix \ref{app:bounds-r2}), together with an additional $\bd{w}$-dependent constraint \cite{SP21, LCLS21, CLLPPS23} whose residual is given by
\begin{align}\label{eq:res-r2}
\mathcal{R}_N(\bd\lambda)&\equiv  N-1+w-\sum_{i=1}^N\lambda^\downarrow_i\geq 0\,.
\end{align}
Moreover, we assume that $\mathcal{L}_K=(\bd\lambda^{(1)})$, while the result for $\mathcal{L}_K=(\bd\lambda^{(2)})$ follows analogously.

As a key result, we prove in Appendix \ref{app:bounds-r2} that the sum of the $k$-largest natural occupation numbers $\lambda_i$ is bounded from above by
\begin{widetext}
\begin{equation}\label{eq:sum-r2-ub}
 \sum_{i=1}^k\lambda_i^\downarrow\leq \Omega(w, k, N, \mathcal{L}_K) = \begin{cases} w\sum_{i=1}^k\lambda_i^{(1)\downarrow}+(1-w)\min(k,N) \quad &\text{if } k\neq N\\
 \min\left(N-1+w, w\sum_{i=1}^k\lambda_i^{(1)\downarrow}+(1-w)N\right) \quad &\text{if } k=N\,,
 \end{cases}
\end{equation}
\end{widetext}
and from below by
\begin{equation}\label{eq:sum-r2-lb}
\sum_{i=1}^k\lambda_i^\downarrow\geq N - w\left(N - \sum_{i=1}^k\lambda_i^{(1)\downarrow}\right) - (1-w)\chi_{N,d}^{(k)}\,,
\end{equation}
where
\begin{equation}
\chi_{N,d}^{(k)} = \begin{cases}
N \quad &\text{if }d-1\geq N+k\,,\\
d-k-1\quad &\text{if }d-1< N+k
\end{cases}\,.
\end{equation}
The set of $\bd{\lambda}^\downarrow$ satisfying the constraints in Eqs.~\eqref{eq:sum-r2-ub} and \eqref{eq:sum-r2-lb} is denoted by $\Xi^\downarrow(\bd{w}, \bd\lambda^{(1)})$. Additionally, from the lower bounds in Eq.~\eqref{eq:sum-r2-lb}, we observe that the set $\Xi(\bd{w}, \bd\lambda^{(1)})$, obtained by considering all possible permutations of the spectra $\bd{\lambda} \in \Xi^\downarrow(\bd{w},\bd\lambda^{(1)})$, is non-convex. At the same time, the hyperplane representation of $\Xi^\downarrow(\bd{w},\bd\lambda^{(1)})$ in Eqs.~\eqref{eq:sum-r2-lb} and \eqref{eq:sum-r2-ub} is effectively independent of $N,d$. Furthermore, the spectral sets $\Xi(\bd{w}, \bd\lambda^{(1)})$ , $\Lambda(\bd{w}, \bd\lambda^{(1)})$ satisfy the inclusion relations
\begin{equation}
\Lambda\left(\bd w, \bd\lambda^{(1)}\right)\subseteq\Xi\left(\bd w, \bd\lambda^{(1)}\right)\subseteq \Sigma(\bd w)\,.
\end{equation}
We illustrate the sets $\Xi(\bd{w}, \mathcal{L}_K)$ (blue) , $\Lambda(\bd{w}, \mathcal{L}_K)$ (green) and $\Sigma(\bd w)$ (grey) for three different choices of $\bd w$ and $\bd\lambda^{(1)}$ in Fig.~\ref{fig:residual_polytopes}.

Through the knowledge of $\mathcal{L}_K$ constraints in Eqs.~\eqref{eq:sum-r2-ub} and \eqref{eq:sum-r2-lb} further imply more restrictive constraints on the corresponding residuals of the relaxed $\bd w$-ensemble $N$-representability constraints for all $k\in \{1, ..., d-1\}$ as detailed in Appendix \ref{app:bounds-r2}. In the following, we focus on the residual $\mathcal{R}_N(\bd\lambda)$ in Eq.~\eqref{eq:res-r2}, as it corresponds to the single $\bd w$-ensemble one-body $N$-representability constraint that is more restrictive than Coleman's ensemble $N$-representability conditions.
A brief calculation based on Eq.~\eqref{eq:sum-r2-ub} (see Appendix \ref{app:bounds-r2}) shows that the $\bd w$-dependent residual $\mathcal{R}_{\bd w}(\bd \lambda)$ is bounded from below by
\begin{equation}\label{eq:r2-lower-bound}
\mathcal{R}_N(\boldsymbol{\lambda}) \geq\max\left\{\mathcal{R}_{\min}^\mathrm{triv}, w\left(N-\sum_{i=1}^N \lambda_i^{(1)\downarrow}\right)-1+w\right\}\,,
\end{equation}
where $\mathcal{R}_{\min}^\mathrm{triv}=0$ represents the trivial lower bound, which holds independently of $\bd\lambda^{(1)}$.
Eq.~\eqref{eq:r2-lower-bound} shows that the non-trivial lower bound becomes tighter as the $l^1$-norm $\|\bd\lambda^{(1)}\|_1$ deviates further from $N$. For ground states of physical systems, this means the non-trivial bound becomes tighter as the ground state becomes more correlated.
In fact, a simple calculation shows that the new lower bound on $\mathcal{R}_N(\bd\lambda)$ in Eq.~\eqref{eq:r2-lower-bound} is more restrictive than the trivial lower bound $\mathcal{R}_N(\bd\lambda)\geq 0$ if
\begin{equation}\label{eq:crossing}
\sum_{i=1}^N \lambda_i^{(1)\downarrow}\leq N-\frac{1-w}{w}\,.
\end{equation}

We illustrate this behaviour using two chemical systems that exhibit strong correlation in certain parameter regimes: the hydrogen molecules $\text{H}_2$ and $\text{H}_4$, shown in Figs.~\ref{fig:h2} and \ref{fig:h4}, respectively. For both systems, we focus on the singlet sector and compute the ground and first excited states for various geometries using exact diagonalization in the cc-pvdz basis set. In the case of $\text{H}_4$, which consists of two $\text{H}_2$ dimers, the bond distance between the two hydrogen atoms in each dimer is fixed at $R_0 = 1.05 \mathring{\text{A}}$. The residual $\mathcal{R}_N(\bd \lambda)$ is shown in red, and the lower bound from Eq.~\eqref{eq:r2-lower-bound} is represented by the blue dashed lines.
While $\text{H}_2$ shows strong static correlation in the dissociation limit, $\text{H}_4$ is strongly correlated when the distance $R$ between the two $\text{H}_2$ dimers equals the fixed separation $R_0$. In particular, the new lower bound for the residual derived in Eq.~\eqref{eq:r2-lower-bound} is indeed tighter than the trivial bound $\mathcal{R}_{\min}^\mathrm{triv}=0$ in cases of strong correlation. The upper bounds on $\mathcal{R}_N(\bd \lambda)$ will be discussed next.

The residual $\mathcal{R}_N(\bd\lambda)$ is bounded from above by (see Appendix \ref{app:bounds-r2})
\begin{equation}\label{eq:r2-upper-bound}
\mathcal{R}_N(\boldsymbol{\lambda}) \leq\min\left\{\mathcal{R}_{\max}^\mathrm{triv}, \mathcal{R}_{\max}(\bd\lambda^{(1)})\right\}\,,
\end{equation}
where
\begin{widetext}
\begin{equation}
 \mathcal{R}_{\max}(\bd\lambda^{(1)})= \begin{cases}
N-1+w\left(1 -  \sum_{i=1}^N\lambda_i^{(1)\downarrow}  \right)   \quad &\text{if }d-1\geq 2N\,,\\
 d-2-N+w\left(2-d+2N -   \sum_{i=1}^N\lambda_i^{(1)\downarrow}  \right)          \quad &\text{if }d-1<2N
\end{cases}
\end{equation}
\end{widetext}
and the $\bd w$-dependent trivial upper bound $\mathcal{R}_{\max}^\mathrm{triv}$ on $\mathcal{R}(\bd\lambda)$ in Eq.~\eqref{eq:r2-upper-bound} is given by
\begin{equation}
 \mathcal{R}_{\max}^\mathrm{triv}=w-1+N\left(1-\frac{N}{d}\right)\,.
\end{equation}
As a consistency check, we observe that both the lower and upper bounds on $\mathcal{R}_N(\bd\lambda)$ in Eqs.~\eqref{eq:r2-lower-bound} and \eqref{eq:r2-upper-bound} reduce to the correct value $N-\sum_{i=1}^N\lambda_i^{(1)\downarrow}$ in the limit $w\to 1$.

\begin{figure}[t]
\includegraphics[width=\linewidth]{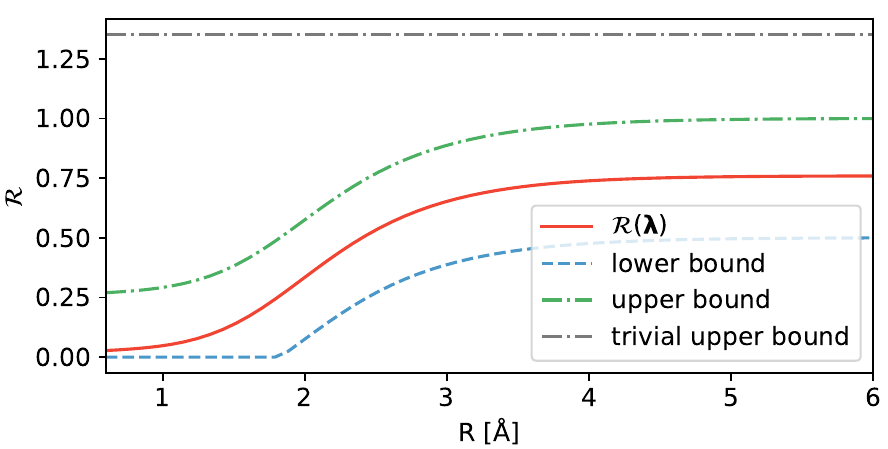}
\caption{Illustration of the residual $\mathcal{R}_N(\bd w)$ (red, solid) and its improved lower (blue, dashed) and upper (green, dashdotted) bound based on the knowledge of $\bd\lambda^{(1)}$ for $\text{H}_2$ in the singlet sector, $r=2, w=0.75$ in the cc-pvdz basis set. The trivial upper bound is shown in gray (dashdotted) and is always less restrictive than the new upper bound (see text for more details). \label{fig:h2}}
\end{figure}
\begin{figure}[t]
\includegraphics[width=\linewidth]{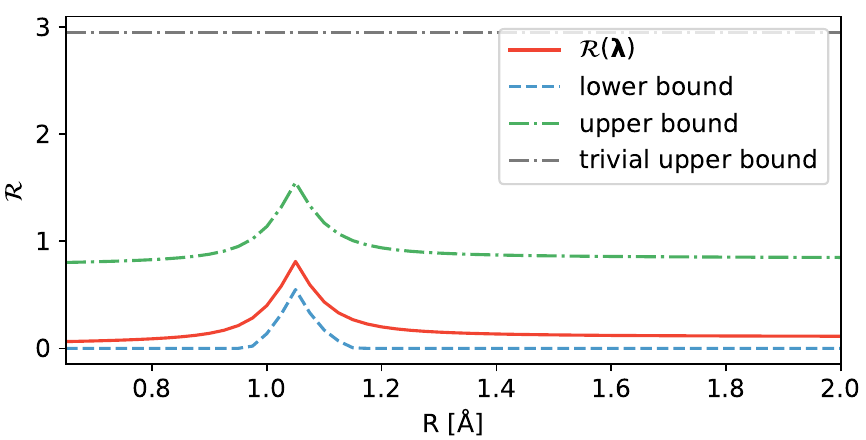}
\caption{Illustration of the residual $\mathcal{R}_N(\bd w)$ (red, solid) and its improved lower (blue, dashed) and upper (green, dashdotted) bound based on the information about the ground state natural occupation number vector $\bd\lambda^{(1)}$ for $\text{H}_4$ in the singlet sector, $r=2, w=0.75$ in the cc-pvdz basis set.
\label{fig:h4}}
\end{figure}

In contrast to the lower bound in Eq.~\eqref{eq:r2-lower-bound}, the non-trivial upper bound $\mathcal{R}_{\max}(\bd\lambda^{(1)})$ becomes tighter as $\|\bd\lambda^{(1)}\|_1$ deviates less from $N$, i.e., the weaker the correlation in the ground state of a Hamiltonian.
This behaviour is further illustrated for $\text{H}_2$ and $\text{H}_4$ in Figs.~\ref{fig:h2} and \ref{fig:h4}, where both $\mathcal{R}_{\max}^\mathrm{triv}$ (grey, dash-dotted) and $\mathcal{R}_{\max}(\bd\lambda^{(1)})$ (green, dash-dotted) are shown. In particular, we observe that $\mathcal{R}_{\max}(\bd\lambda^{(1)})$ is more restrictive than $\mathcal{R}_{\max}^\mathrm{triv}$ over the entire parameter regime for varying $R$ in both cases.

\subsection{Generic $r\geq 3$\label{sec:residual-gen}}

In this section, we outline the procedure for computing improved bounds on the residuals of the $\bd w$-ensemble $N$-representability constraints that characterize the set $\Sigma(\bd w)$ for a general value of $r$. Specifically, for $r \geq 3$, all up to $r$ natural occupation number vectors $\bd\lambda^{(i)}$ may be fixed, as discussed in Sec.~\ref{sec:natorb}.

When $|K| = r - 1$ natural occupation number vectors are fixed, we are left with only a single variable vector $\widetilde{\bd\lambda}\in \Sigma(\bd w_0)$. In this case, the derivation of the corresponding bounds on the residuals $\mathcal{R}_i$ proceeds analogously to the case $r = 2$. However, when $|K| < r - 1$, the union over all $\bd x^{(|K|+1)} \equiv \widetilde{\bd\lambda} \in \Sigma(\bd w_{K^c})$ in Eq.~\eqref{eq:union} reflects the fact that the remaining, non-fixed eigenvalue vectors arise from a $\bd w_{K^c}$-ensemble, as specified in Eqs.~\eqref{eq:ENwKc} and \eqref{eq:E1NwKc}.

As an illustrative example, we consider the case $r = 3$, where $\bd w = (w_1, w_2, w_3, 0, \ldots)$ with $w_3 = 1 - w_1 - w_2$. The general case for $r \geq 3$ follows analogously. Since the case $|K| = 2$ corresponds directly to the $r = 2$ scenario, we focus here on the case where only one, i.e., $|K| = 1$, natural occupation number vector is fixed. For concreteness, we assume this to be the one associated with the largest weight $w_1$. In this setting, we have $\bd w_{K^c} = (w_2, w_3, 0, \ldots)$ and $\widetilde{\bd\lambda} \in \Sigma(\bd w_{K^c})$.
To compute the union over all $\widetilde{\bd\lambda} \in \Sigma(\bd w_{K^c})$, we must first characterize the set $\Sigma(\bd w_{K^c})$, which can be done analogously to the characterization of $\Sigma(\bd w)$ presented in Refs.~\cite{SP21, LCLS21}. In particular, a straightforward calculation shows that $\Sigma(\bd w_{K^c})$ corresponds to the permutohedron generated by the vertex
\begin{equation}\label{eq:vwKc}
    \bd v = (\underbrace{w_2+w_3, ..., w_2+w_3}_{N-1}, w_2, w_3, 0, ...)\,,
\end{equation}
and thus given by
\begin{equation}\label{eq:Sigmawc-r3}
    \Sigma(\bd w_{K^c}) = \mathrm{conv}\left(\left\{\pi(\bd v)\,|\,\pi\in \mathcal{S}_d\right)\right\}\,.
\end{equation}
In particular $\|\bd v\|_1\neq N$, meaning that the $\bd w_{K^c}$ correspond to linear operators not normalized to unity (recall Eqs.~\eqref{eq:ENwKc} and \eqref{eq:EbNwKc}).

For $r=3$, there are three non-trivial $\bd w$-dependent constraints characterizing the set $\Sigma(\bd w)$ \cite{SP21, LCLS21, CLLPPS23}. To illustrate the general procedure for deriving the refined bounds on the corresponding residuals, we pick as an example the residual $\mathcal{R}_N$ in Eq.~\eqref{eq:res-r2}, which is also present in the case of $r=2$. Then, following an argument analogous to the derivation of Eq.~\eqref{eq:sum-r2-ub}, the sum of the $N$ largest eigenvalues is bounded from above by
\begin{widetext}
\begin{align}\label{eq:res-r3-up}
    \sum_{i=1}^N\lambda_i^\downarrow\leq &\min\Big(N-1+w_1, w_1\sum_{i=1}^N\lambda^{(1)\downarrow}_i+  (N-1)(w_2+w_3)+w_2  \Big)\,,
\end{align}
\end{widetext}
where, on the right-hand side, we have used the bound
\begin{equation}
    \sum_{i=1}^N\widetilde{\lambda}_i^\downarrow\leq (N-1)(w_2+w_3)+w_2
\end{equation}
which follows from Rado's theorem \cite{R52} and the hyperplane representation of $\Sigma(\bd w_{K^c})$ Eq.~\eqref{eq:Sigmawc-r3}.
The corresponding lower bound on $\sum_{i=1}^N \lambda_i^\downarrow$ can be derived in an analogous manner and, together with Eq.~\eqref{eq:res-r3-up}, yields improved bounds on $\mathcal{R}_N$ for $r = 3$. Similarly, tighter bounds on the additional residuals constraining the set $\Sigma(\bd w)$ can be obtained, though we leave the explicit calculations to the reader.

\section{Convexification and application to lattice DFT \label{sec:spectral}}

In this section, we discuss the convex hull of $\Xi(\bd w, \bd\lambda^{(1)})$. Moreover, we illustrate that the respective set characterizes the set of admissible densities in lattice DFT for $r=2$.

We first observe that by definition the convex hull of $\Xi(\bd w, \bd\lambda^{(1)})$,
\begin{equation}\label{eq:Sigma-Xi}
\Sigma(\bd w, \bd\lambda^{(1)})\equiv \mathrm{conv}\left(\Xi\left(\bd w, \bd\lambda^{(1)}\right)\right)\,,
\end{equation}
is fully characterized by the upper bounds on the sum of the $k$-largest natural occupation numbers in Eq.~\eqref{eq:sum-r2-ub} for all $k\in \{1, ..., d-1\}$.

The natural variable in lattice DFT is the lattice site occupation number vector $\bd n = \mathrm{diag}(\gamma)$ \cite{GS86, SGN95, IH10, XCTK12, SP14, Coe19, PvL21, PvL23, PvL24}. Therefore, the goal is to determine the set of $\bd n$ in ensemble lattice DFT for excited states (lattice EDFT) \cite{GOK88c, YPBU17, Fromager2020-DD, Loos2020-EDFA, Cernatic21, Yang21, GKGGP23, SKCPJB24, CPSF24} compatible with $\mathcal{L}_k$ for $r = 2$.
The Schur-Horn theorem \cite{Schur1923, Horn54} implies that any lattice site occupation number vector $\bd  n$ is majorized by the vector $\bd \lambda$ of eigenvalues of the 1RDM leading to
\begin{equation}\label{eq:ub-density}
 \sum_{i=1}^kn_i^\downarrow\leq  \sum_{i=1}^k\lambda_i^\downarrow\leq \Omega(w, k, N, \mathcal{L}_K)
\end{equation}
with $\Omega(w, k, N, \mathcal{L}_K)$ given by Eq.~\eqref{eq:sum-r2-ub}. It follows that for fixed $\bd\lambda^{(1)}$, the lattice site occupation number vector $\bd n$ in an ensemble DFT calculation for $r=2$ must be an element of the set $\Sigma(\bd w, \bd\lambda^{(1)})$.
To apply Eq.~\eqref{eq:ub-density} in ensemble lattice EDFT, one may exploit the knowledge of the ground state natural occupation number vector $\bd \lambda^{(1)}$ to restrict the set of admissible lattice site occupation number vectors $\bd n$ further. As a consequence of Eq.~\eqref{eq:r2-upper-bound}, Eq.~\eqref{eq:ub-density} will be most restrictive if the ground state has strong static correlation. In addition, one may exploit the freedom in choosing the weight vector $\bd w$ to make the set as restrictive as possible for a given $\bd\lambda^{(1)}$.
For instance, the additional constraints in Eq.~\eqref{eq:ub-density} will be particularly important when ground state functionals are employed to model also the excited states, as it is commonly the case in ensemble DFT.

We now provide a more detailed discussion of the geometric properties of $\Sigma(\bd w, \bd\lambda^{(1)})$. Since $\Sigma(\bd w)$ is a permutohedron for $r=2$, also the set $\Sigma(\bd w, \bd\lambda^{(1)})$ is a permutohedron,
\begin{equation}
\Sigma(\bd w, \bd \lambda^{(1)}) = \mathrm{conv}\left(\left\{\pi(\widetilde{\bd v})\,|\,\pi\in\mathcal{S}_d \right\}\right)\,.
\end{equation}
The generating vertex $\widetilde{\bd v}$ of $\Sigma(\bd w, \bd \lambda^{(1)})$ can be obtained from the hyperplane representation of $\Sigma(\bd w, \bd \lambda^{(1)})$ through Eq.~\eqref{eq:sum-r2-ub}. This eventually leads to (see Appendix \ref{app:v-rep})
\begin{align}
\widetilde{v}_i &= w\lambda_i^{(1)\downarrow}+(1-w)\quad \text{ for }1\leq i\leq N-1\,,\nonumber\\
\widetilde{v}_N &=\min\Big(w\lambda_N^{(1)\downarrow}+(1-w), N-1+w -\sum_{j<N}\widetilde{v}_j\Big)\,,\nonumber\\
\widetilde{v}_{N+1} &= w\lambda_{N+1}^{(1)\downarrow} + \widetilde{x}\nonumber\\
\widetilde{v}_{i} &= w\lambda_i^{(1)\downarrow}\quad \forall i \in \{N+2, d\}\,,
\end{align}
where we defined
\begin{equation}
\widetilde{x}\equiv  w\lambda_N^{(1)\downarrow} +(1-w)-\widetilde{v}_N\geq 0  \,.
\end{equation}
As a consistency check, we note that Rado's theorem \cite{R52} produces the inequalities in Eq.~\eqref{eq:sum-r2-ub}.
Moreover, Eq.~\eqref{eq:sum-r2-ub} reveals that unless all $(N-1)$ largest entries of $\bd\lambda^{(1)}$ equal one (recall Fig.~\ref{fig:residual_polytopes} for $N=2$), the new exclusion principle constraints defining the spectral set $\bd\lambda \in \Sigma(\bd w, \bd \lambda^{(1)})$ are always stricter than the $\bd w$-constraints that define $\Sigma(\bd w)$.

\begin{figure}[tb]
\includegraphics[width=0.85\linewidth]{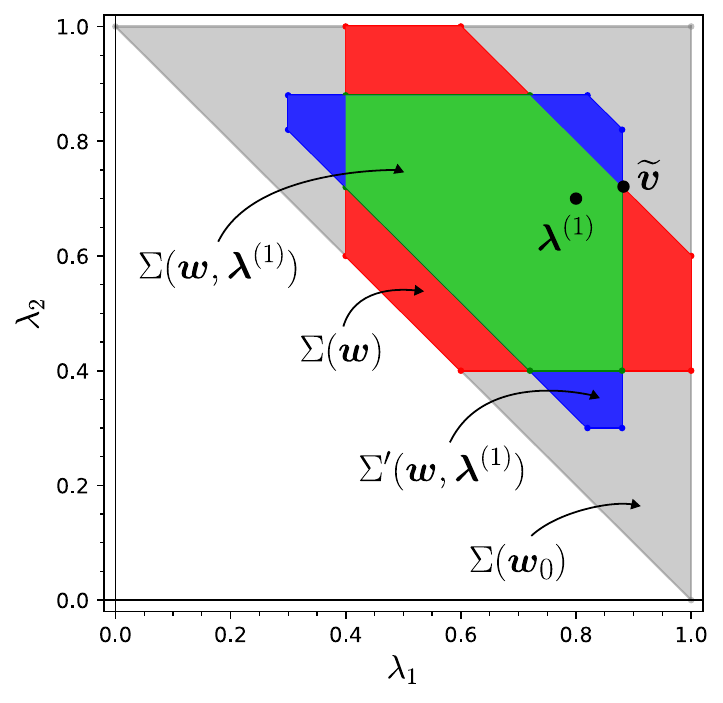}
\caption{Illustration of the construction of the spectral set $\Sigma(\bd w, \bd\lambda^{(1)})$ \eqref{eq:Sigma-Xi} (green) for $N=2, d=3$ such that $\lambda_3=N-\lambda_1-\lambda_2$. The spectral set $\Sigma(\bd w, \bd \lambda^{(1)})$ for fixed $\bd\lambda^{(1)} =(0.8,0.7,0.5)$ and $w=0.6$ follows from the intersection of $\Sigma^\prime(\bd w, \bd \lambda^{(1)})$ (blue) and $\Sigma(\bd w)$ (red). The generating vertex of $\Sigma(\bd w, \bd\lambda^{(1)})$ is denoted by $\widetilde{\bd v}$, and the grey polytope illustrates the Pauli simplex (see text for more explanations).
 \label{fig:polytopes}}
\end{figure}

Furthermore, using the hyperplane representation of the spectral set $\Sigma(\bd{w}, \bd{\lambda}^{(1)})$, it follows from Eq.~\eqref{eq:Sigma-Xi} that $\Sigma(\bd{w}, \bd{\lambda}^{(1)})$ is given by the intersection.
\begin{equation}\label{eq:sigma-non-r2}
\Sigma(\bd w,  \bd\lambda^{(1)})\equiv \Sigma(\bd w)\cap \Sigma^\prime(\bd w, \bd\lambda^{(1)})\,,
\end{equation}
where
\begin{equation}\label{eq:S-prime-r2}
\Sigma^\prime(\bd w,\bd\lambda^{(1)})\equiv w P_{\bd\lambda^{(1)}}+ (1-w) P_{\bd v^\mathrm{HF}}
\end{equation}
with $\bd v^\mathrm{HF} = (1, ..., 1, 0, ...)$. Since $\Sigma^\prime(\bd w, \bd\lambda^{(1)})$ is convex as the Minkowski sum of two convex polytopes, $\Sigma(\bd w, \bd\lambda^{(1)})$, being the intersection of two polytopes, is also convex in agreement with Eqs.~\eqref{eq:Sigma-Xi} and \eqref{eq:sigma-non-r2}.

We illustrate Eq.~\eqref{eq:sigma-non-r2} and \eqref{eq:S-prime-r2} for $r=2,N=2,d=3$ and $\bd\lambda^{(1)} = (0.8,0.7,0.5), w =0.6$ in Fig.~\ref{fig:polytopes} (see also the right panel of Fig.~\ref{fig:residual_polytopes}). The set $\Sigma^\prime(\bd w, \bd\lambda^{(1)})$ (blue) in Eq.~\eqref{eq:S-prime-r2} follows from the Minkowski sum (i.e., element-wise sum of sets \cite{Ziegler1994}) of the permutohedron of $\bd\lambda^{(1)}$ with the Pauli simplex $\Sigma(\bd w_0)$ (gray) (recall that $\bd w_0 = (1, 0, ...)$). According to Eq.~\eqref{eq:sigma-non-r2}, $\Sigma(\bd w, \bd\lambda^{(1)})$ (green) eventually follows as the intersection of $\Sigma(\bd w)$ (red) and $\Sigma^\prime(\bd w, \bd\lambda^{(1)})$ (blue).

\section{Conclusions and outlook\label{sec:concl}}

In this work, we systematically addressed the refined $N$-representability problem for $\boldsymbol{w}$-ensemble one-body reduced density matrices (1RDMs), focusing on scenarios where partial information about the ensemble—such as full 1RDMs or their natural occupation numbers—is known \textit{a priori}. Such situations naturally arise in $\boldsymbol{w}$-RDMFT calculations~\cite{SP21, LCLS21, LS23-sp, LS23-njp}, where excited states are computed sequentially. Ensembles also emerge in systems with degenerate eigenstates and in open quantum systems. Extensions of the present framework to these broader contexts may be explored in future work.

We showed that incorporating known 1RDMs enables the formulation of a refined $\boldsymbol{w}$-ensemble one-body $N$-representability problem, which by construction imposes stricter constraints on the admissible set of 1RDMs than the original formulation~\cite{SP21, LCLS21, LS23-sp, LS23-njp}. However, this refined problem is inherently complex, as it is not a purely spectral problem and the admissible set of 1RDMs is non-convex. To address this, we introduced a hierarchy of systematic relaxations, ultimately yielding a spectral set defined by eigenvalue constraints that are practically characterizable. These developments provide a tractable and physically meaningful approach for incorporating refined $N$-representability constraints into functional theories.

We further investigated the geometric structure of the resulting non-convex spectral set and derived a hyperplane representation in terms of a finite number of linear inequalities on the natural occupation numbers. The physical significance of these system-size-independent constraints was supported by both analytic derivation and numerical illustration in molecular systems exhibiting varying degrees of correlation.

A key application of this work lies in $\boldsymbol{w}$-RDMFT~\cite{SP21, LCLS21, LS23-sp, LS23-njp} and ensemble density functional theory (EDFT) for excited states~\cite{GOK88b, GOK88c, Loos2020-EDFA, Fromager2020-DD, Cernatic21, GK21, GL23, SKCPJB24, Fromager24-EDFT, CLSF24}. In $\boldsymbol{w}$-RDMFT, excited states are typically computed through a sequence of calculations with varying weight vectors $\boldsymbol{w}$, each containing an increasing number of non-zero components. At each step, the natural occupation number vectors of previously computed excited-state 1RDMs can be used to impose additional constraints, thereby restricting the optimization to a smaller, more physically relevant subset of ensemble 1RDMs. This is particularly advantageous when using approximate $\boldsymbol{w}$-RDMFT functionals, where such constraints may significantly enhance accuracy and stability. In particular, when computing energy gaps, this strategy allows one to leverage ground-state information already available from existing RDMFT functionals~\cite{GPB05, Piris19, Piris21-1, P21, LKO22, GBM22, SNF22, GBM23, GBM24, CG24, YS24, VMSS24}.

Current EDFT implementations often rely on ground-state functionals adapted to $\boldsymbol{w}$-ensembles due to practical limitations. The incorporation of additional information, such as natural occupation numbers from RDMFT or wave function methods, offers a path toward improved excited-state predictions within this framework.

These constraints can also be incorporated into variational two-electron reduced density matrix (2-RDM) methods~\cite{Mazziotti.2004, Mazziotti.2011mwm, Mazziotti.2020z0p, Knight.2022, Schouten.2023, Xie.20222n} to provide approximate excited-state solutions. As in the case of the 1RDM theories, the constraints may permit a series of variational 2-RDM calculations with an increasing number of nonzero spectral weights $w_i$, from which an approximation to the excited states and their 2-RDMs can be computed.

Beyond applications in functional and variational methods, partial information about quantum ensembles—and the additional constraints it imposes on reduced density matrices—may also prove valuable in reduced density matrix tomography~\cite{XLKL17, ZNM21, Mazziotti.2020z0p, Knight.2022, Schouten.2023, Xie.20222n} and in machine learning approaches for quantum systems~\cite{SPTP23, HPS24}. By integrating physically grounded constraints derived from ensemble theory, these data-driven or inference-based methods may achieve improved reliability, interpretability, and performance in future applications.

\begin{acknowledgments}
We acknowledge financial support from the U.S. National Science Foundation (NSF) Grant No. CHE-2155082 (D.M.), the German Research Foundation (Grant SCHI 1476/1-1) (J.L., C.S.), the U.S. NSF Graduate Research Fellowship Program under Grant No. 2140001 (I.A.), and the International Max Planck Research School for Quantum Science and Technology (IMPRS-QST) (J.L.). The project/research is also part of the Munich Quantum Valley, which is supported by the Bavarian state government with funds from the Hightech Agenda Bayern Plus.
\end{acknowledgments}
\appendix

\section{Refined non-relaxed $\bd w$-ensemble $N$-representability constraints for $d=3$ from Horn's problem\label{app:cons}}

In this section, we show as an example for $d=3$ how the refined $\bd w$-ensemble $N$-representability constraints characterizing the non-convex set $\Lambda(\bd w, \mathcal{L}_K)$ are obtained according to Lemma \ref{lem:1}. These will be the constraints that characterize the set $\Lambda^\downarrow(\bd w, \mathcal{L}_K)$ for the example in Fig.~\ref{fig:Horn} in Sec.~\ref{sec:natorb}.

We first recall Horn's problem discussed in Sec.~\ref{sec:PSP}, i.e., the principal submatrix problem for three Hermitian matrices. Horn's problem asks the following question: given three decreasingly ordered vectors $\bd x^\downarrow, \bd y^\downarrow, \bd z^\downarrow\in \RR^d$, do there exist three Hermitian matrices $X, Y, Z\in \RR^{d\times d}$ such that $Z=X+Y$?
The set of admissible $\bd z^\downarrow$ for fixed $\bd x^\downarrow, \bd y^\downarrow$ is a convex polytope denoted by $\Sigma_{\mathrm{H}}^\downarrow(\bd x, \bd y)$, where the subscript $\mathrm{H}$ stands for Horn's problem.
In our case the ensemble decomposition
\begin{equation}
\gamma = w\gamma^{(1)}+(1-w)\gamma^{(2)}
\end{equation}
is used to assign $Z\equiv \gamma, X=w\gamma^{(1)}, Y=(1-w)\gamma^{(2)}$, where only $\bd x^\downarrow \equiv \bd\lambda^{(1)\downarrow}$ is fixed. For $\bd y^\downarrow\equiv \bd\lambda^{(2)\downarrow}$ we consider all $\bd\lambda^{(2)\downarrow}\prec \bd v^\mathrm{HF}$ which are majorized by the Hartree-Fock natural occupation number vector $\bd v^\mathrm{HF} =  (1, ..., 1, 0, ..., 0)$.

The goal of this section is to find the hyperplane representation of the set
\begin{equation}\label{eq:app-ld}
\Lambda^\downarrow(\bd w, \mathcal{L}_K) = \Lambda_H^\downarrow(\bd w, \mathcal{L}_K)\cap\Sigma(\bd w)
\end{equation}
of decreasingly ordered natural occupation number vectors $\bd\lambda\in \RR^d$ as defined in Eq.~\eqref{eq:inter} with $\Lambda_H^\downarrow(\bd w, \mathcal{L}_K)$ given by Eq.~\eqref{eq:union} in the main text. A hyperplane representation of $\Sigma(\bd w)$ is known from Refs.~\cite{SP21, LCLS21, CLLPPS23}. To calculate the intersection of the two sets on the right hand-side of Eq.~\eqref{eq:app-ld}, we have to determine (recall Eq.~\eqref{eq:union})
\begin{equation}\label{eq:app-union}
\Lambda_H^\downarrow(\bd w, \mathcal{L}_K) \equiv \bigcup_{\substack{\bd\lambda^{(2)^\downarrow}\in\Sigma^\downarrow(\bd w_0)}}\Sigma^\downarrow_\mathrm{H}(\{w_i\bd\lambda^{(i)\downarrow}\}_{i=1, 2})\,.
\end{equation}

For $d=3$ as discussed in this section, the solution to Horn's problem which characterizes the convex polytope $\Sigma^\downarrow_\mathrm{H}(\{w_i\bd\lambda^{(i)\downarrow}\}_{i=1, 2})$ is given by the twelve inequalities \cite{Bhatia, Fulton00, Knutson}
\begin{align}
&\lambda_1^\downarrow \leq w\lambda_1^{(1)\downarrow}+(1-w)\lambda_1^{(2)\downarrow}\,,\,\,\lambda_2^\downarrow \leq w\lambda_1^{(1)\downarrow}+(1-w)\lambda_2^{(2)\downarrow}\,,\nonumber\\
& \lambda_2^\downarrow \leq w\lambda_2^{(1)\downarrow}+(1-w)\lambda_1^{(2)\downarrow}\,,\,\,\lambda_3^\downarrow \leq w\lambda_1^{(1)\downarrow}+(1-w)\lambda_3^{(2)\downarrow}\,,\nonumber\\
& \lambda_3^\downarrow \leq w\lambda_3^{(1)\downarrow}+(1-w)\lambda_1^{(2)\downarrow}\,,\,\,
 \lambda_3^\downarrow \leq w\lambda_2^{(1)\downarrow}+(1-w)\lambda_2^{(2)\downarrow}\,,\nonumber\\
&\lambda_1^\downarrow + \lambda_2^\downarrow \leq w(\lambda_1^{(1)\downarrow}+\lambda_2^{(1)\downarrow})+(1-w)(\lambda_1^{(2)\downarrow}+\lambda_2^{(2)\downarrow})\,,\nonumber\\
&\lambda_1^\downarrow + \lambda_3^\downarrow \leq w(\lambda_1^{(1)\downarrow}+\lambda_3^{(1)\downarrow})+(1-w)(\lambda_1^{(2)\downarrow}+\lambda_2^{(2)\downarrow})\,,\nonumber\\
&\lambda_2^\downarrow + \lambda_3^\downarrow \leq w(\lambda_2^{(1)\downarrow}+\lambda_3^{(1)\downarrow})+(1-w)(\lambda_1^{(1)\downarrow}+\lambda_2^{(1)\downarrow})\,,\nonumber\\
&\lambda_1^\downarrow + \lambda_3^\downarrow \leq w(\lambda_1^{(1)\downarrow}+\lambda_2^{(1)\downarrow})+(1-w)(\lambda_1^{(2)\downarrow}+\lambda_3^{(2)\downarrow})\,,\nonumber\\
&\lambda_2^\downarrow + \lambda_3^\downarrow \leq w(\lambda_1^{(1)\downarrow}+\lambda_2^{(1)\downarrow})+(1-w)(\lambda_2^{(2)\downarrow}+\lambda_3^{(2)\downarrow})\,,\nonumber\\
&\lambda_2^\downarrow + \lambda_3^\downarrow \leq w(\lambda_1^{(1)\downarrow}+\lambda_3^{(1)\downarrow})+(1-w)(\lambda_1^{(2)\downarrow}+\lambda_3^{(2)\downarrow})\,.
\end{align}
Due to the linearity of ensemble one-body $N$-representability constraints, the union over all $\bd\lambda^{(2)\downarrow}\in\Sigma(\bd w_0)$ follows from a straightforward calculation for fixed value of the total particle number $N$.

Then, the hyperplane representation of the set $\Lambda_H^\downarrow(\bd w, \mathcal{L}_K)$ in Eq.~\eqref{eq:app-union} for $N=2$ follows as
\begin{align}\label{eq:app-cons}
\lambda_1^\downarrow &\leq w\lambda_1^{(1)\downarrow}+1-w\,,\nonumber\\
 \lambda_2^\downarrow &\leq w\lambda_2^{(1)\downarrow}+1-w\,,\quad\lambda_3^\downarrow \leq w\lambda_1^{(1)\downarrow}+\frac{2(1-w)}{3}\,,\nonumber\\
 \lambda_3^\downarrow &\leq w\lambda_3^{(1)\downarrow}+1-w\,,\nonumber\\
\lambda_1^\downarrow + \lambda_3^\downarrow &\leq w(\lambda_1^{(1)\downarrow}+\lambda_2^{(1)\downarrow})+\frac{3(1-w)}{2}\,,\nonumber\\
\lambda_2^\downarrow + \lambda_3^\downarrow &\leq w(\lambda_1^{(1)\downarrow}+\lambda_2^{(1)\downarrow})+\frac{4(1-w)}{3}\,,\nonumber\\
\lambda_2^\downarrow + \lambda_3^\downarrow &\leq w(\lambda_1^{(1)\downarrow}+\lambda_3^{(1)\downarrow})+\frac{3(1-w)}{2}\,.
\end{align}
Together with the relaxed $\bd w$-ensemble one-body $N$-representability constraints describing $\Sigma(\bd w)$ the constraints in Eq.~\eqref{eq:app-cons} yield a hyperplane representation of the set $\Lambda^\downarrow(\bd w, \bd\lambda^{(1)})$. Moreover, a minimal hyperplane representation can be deduced based on $\bd\lambda^{(1)}$.

\section{Derivation of the upper and lower bounds on the residuals for $r=2$\label{app:bounds-r2}}

In this section, we derive the upper and lower bounds for the residuals of the relaxed one-body $\bd w$-ensemble $N$-representability constraints that characterize the spectral set $\Sigma(\bd w)$ defined in Eq.~\eqref{eq:Sigma-w}.

We first recall that a hyperplane representation of $\Sigma(\bd w)$ is given by \cite{SP21, LCLS21, CLLPPS23}
\begin{align}\label{eq:w-Nrepr-cons}
\sum_{i=1}^k\lambda_i^\downarrow&\leq k\quad \forall k\in\{1, ..., N-1\}\,, \nonumber\\
\sum_{i=1}^N\lambda_i^\downarrow&\leq N-1+w\nonumber\\
\sum_{i=1}^k\lambda_i^\downarrow&\leq N\quad \forall k\in\{N+1, ..., d-1\}\,,\nonumber\\
\sum_{i=1}^d\lambda_i^\downarrow&=1\,.
\end{align}
Any linear inequality in Eq.~\eqref{eq:w-Nrepr-cons} can be written in the form $\bd a^\mathrm{T}\cdot\bd\lambda^\downarrow\leq b$.
The residuals of these linear inequalities are defined as the difference between the right-hand side and the left-hand side, i.e., $\mathcal{R}_i(\bd\lambda) \equiv b-\bd a^\mathrm{T}\cdot\bd\lambda^\downarrow\geq 0$. For the single $\bd w$-dependent constraint in Eq.~\eqref{eq:w-Nrepr-cons} this leads to Eq.~\eqref{eq:res-r2} in Sec.~\ref{sec:residuals}.

Ky Fan showed in 1949, that a necessary linear inequality for the solution to Horn's problem is given by \cite{Fulton00, Bhatia} (translated into our setting)
\begin{equation}\label{eq:KyFan}
\sum_{i=1}^k \lambda_i^\downarrow\leq \sum_{i=1}^k\left(w\lambda_i^{(1)\downarrow} +(1-w)\lambda_i^{(2)\downarrow}\right)\,\,\forall k\leq d\,.
\end{equation}
The above inequality immediately implies the sum of the $k$ largest natural occupation numbers $\lambda_i$ is bounded from below according to
\begin{equation}
\sum_{i=1}^k\lambda_i^\downarrow \geq w\sum_{i=1}^k\lambda_i^{(1)\downarrow} +(1-w)\min(k,N)\,.
\end{equation}
In addition, we have the following upper bound for the smallest $d-k-1$ entries of $\bd\lambda$,
\begin{equation}\label{eq:KyFan-inverse}
\sum_{i=k+1}^d\lambda_i^\downarrow\leq w\sum_{i=k+1}^d\lambda_i^{(1)\downarrow}+(1-w)\sum_{i=1}^{d-k-1}\lambda_i^{(2)\downarrow}\,.
\end{equation}
Due to the second term on the right hand-side we have to distinguish the two cases $d-1\geq N+k$ and $d-1<N+k$ for all $k\in \{1, ..., d-1\}$. Together with the normalizations $\|\bd \lambda^{(1)}\|_1 = \|\bd\lambda^{(2)}\|_1 = N$, we obtain
\begin{equation}
\sum_{i=k+1}^d\lambda_i^\downarrow\leq w\sum_{i=k+1}^d\lambda_i^{(1)\downarrow}+ (1-w)\chi_{N,d}^{(k)}\,,
\end{equation}
where
\begin{equation}
\chi_{N,d}^{(k)} = \begin{cases}
N \quad &\text{if }d-1\geq N+k\,,\\
d-k-1\quad &\text{if }d-1< N+k
\end{cases}\,.
\end{equation}
Therefore, it follows that the sum of the $k$-largest entries of $\bd\lambda$ is bounded from below by
\begin{equation}
\sum_{i=1}^k\lambda_i^\downarrow\geq N - w\left(N-\sum_{i=1}^k\lambda_i^{(1)\downarrow}\right)-(1-w)\chi_{N,d}^{(k)}\,,
\end{equation}
where we used the normalization $\sum_{i=1}^d\lambda_i=N$.

Therefore, the residuals $\mathcal{R}_k(\bd\lambda)$ satisfy the lower bounds
\begin{equation}
\mathcal{R}_k(\bd\lambda) \geq \begin{cases} w\left( k - \sum_{i=1}^k\lambda_i^{(1)\downarrow}\right)\,\,\text{if }k\in \{1, ..., N-1\}\,,\\
w-1+w\left(N - \sum_{i=1}^k\lambda_i^{(1)\downarrow}\right)\,\,\text{if }k=N\,,\\
w\left( N - \sum_{i=1}^k\lambda_i^{(1)\downarrow}\right)\,\,\text{if }k\in \{N+1, ..., d-1\}
\end{cases}\,.
\end{equation}
Moreover, we obtain the following upper bounds
\begin{widetext}
\begin{equation}
\mathcal{R}_k(\bd\lambda) \leq \begin{cases} k-N + w\left(N - \sum_{i=1}^k\lambda_i^{(1)\downarrow}\right) +(1-w) \chi_{N,d}^{(k)}\,\,&\text{if } k\in \{1, ..., N-1\}\,,\\
w\left(N-  \sum_{i=1}^k\lambda_i^{(1)\downarrow}\right)  +(1-w)\left(\chi_{N,d}^{(N)}-1\right)  \,\,&\text{if } k=N\\
w\left(N-  \sum_{i=1}^k\lambda_i^{(1)\downarrow}\right)  +(1-w)\chi_{N,d}^{(k)}\,\,&\text{if } k\in \{N+1, ..., d-1\}
\end{cases}\,.
\end{equation}
\end{widetext}

\section{Derivation of vertex representation of $\Sigma(\bd w, \bd\lambda^{(1)})$ \label{app:v-rep}}

In this section, we derive the vertex representation of $\Sigma(\bd w, \bd\lambda^{(1)})$ for $r=2$ in Sec.~\ref{sec:spectral}.
From the hyperplane representation of the convex set $\Sigma(\bd w, \bd\lambda^{(1)})$ in Eq.~\eqref{eq:Sigma-Xi}, we obtain that $\bd\lambda\in \Sigma(\bd w, \bd\lambda^{(1)})$ if for all $ 1\leq M\leq N-1$,
\begin{equation}
\sum_{i=1}^M\lambda_i^\downarrow\leq \min\left(M,w\sum_{i=1}^M\lambda_i^{(1)\downarrow}+M(1-w)  \right)\,.
\end{equation}
The smaller entry on the right hand-side in the above equation is always given by the second entry in the minimum, i.e., by $w\sum_{i=1}^M\lambda_i^{(1)\downarrow}+M(1-w)$. This implies that the $(N-1)$ largest entries of the generating vertex $\widetilde{\bd v}$ of the permutohedron $\Sigma(\bd w, \bd\lambda^{(1)})$ are given by
\begin{equation}\label{eq:app-vnm1}
\widetilde{v}_i = w\lambda_i^{(1)\downarrow}+(1-w)\quad \forall 1\leq i\leq N-1\,.
\end{equation}
To derive the $N$-th largest entry of $\widetilde{\bd v}$ we use that
\begin{equation}
\sum_{i=1}^N\lambda_i^\downarrow\leq \min\left(N-1+w, w\sum_{i=1}^N\lambda_i^{(1)\downarrow}+N(1-w)   \right)\,,
\end{equation}
where the first entry of the minimum on the right hand-side arises from the hyperplane representation of $\Sigma(\bd w)$. Using Eq.~\eqref{eq:app-vnm1} this implies that
\begin{equation}
\widetilde{v}_N = \min\left( w\lambda_N^{(1)\downarrow}+(1-w), N-1+w -\sum_{j<N}\widetilde{v}_j  \right)\,.
\end{equation}
Moreover, the hyperplane representation of $\Sigma(\bd w, \bd\lambda^{(1)})$ implies that
\begin{equation}
\sum_{i=1}^{N+1}\lambda_i^\downarrow\leq \min\left(N, w\sum_{i=1}^{N+1}\lambda_i^{(1)\downarrow}+N(1-w)   \right)\,.
\end{equation}
Together with $\widetilde{v}_i$ for $i\leq N$, we obtain as an intermediate result that
\begin{align}
\widetilde{v}_{N+1} &= \min\left(N -\sum_{j<N+1}\widetilde{v}_j, \right.\nonumber\\
&\left. \quad \quad w(\lambda_{N+1}^{(1)\downarrow}+\lambda_N^{(1)\downarrow}) +(1-w)-\widetilde{v}_N\right)\,.
\end{align}
Due to the normalization of $\bd\lambda^{(1)}$ to the total particle number, i.e., $\sum_{i=1}^d\lambda_i^{(1)\downarrow}=N$ it holds that
\begin{equation}\label{eq:app-int}
N - \sum_{j<N+1}\widetilde{v}_j\geq w(\lambda_{N+1}^{(1)\downarrow}+\lambda_N^{(1)\downarrow}) +(1-w)-\widetilde{v}_N\,.
\end{equation}
We then define the difference between $w\lambda_N^{(1)\downarrow}+(1-w)$ and $\widetilde{v}_N$ as
\begin{equation}
\widetilde{x}\equiv w\lambda_N^{(1)\downarrow}+(1-w)-\widetilde{v}_N\geq 0\,.
\end{equation}
Together with Eq.~\eqref{eq:app-int} this implies that
\begin{equation}
\widetilde{v}_{N+1} = w\lambda_{N+1}^{(1)\downarrow} + \widetilde{x}\,.
\end{equation}
Moreover, it follows that
\begin{equation}
\widetilde{v}_{i} = w\lambda_i^{(1)\downarrow}\quad \forall N+2\leq i\leq d\,.
\end{equation}

As a consistency check, we verify that the generating vertex $\widetilde{\bd v}$ is decreasingly ordered. The first $N$ entries of $\widetilde{\bd v}$ are automatically decreasingly ordered. The same is true for the vector entries $\widetilde{v}_i\geq \widetilde{v}_j$ for $i<j,  i, j \in \{N+1, ..., d\}$. Therefore, it remains to show that $\widetilde{v}_N\geq \widetilde{v}_{N+1}$. To this end, we first show that
\begin{align}
\widetilde{x} &\leq w\sum_{i=1}^N\lambda_i^{(1)\downarrow}+(1-w)-Nw\nonumber\\
&\leq 1-w\,,
\end{align}
where we used that
\begin{equation}
\widetilde{v}_N\geq wN - w\sum_{j=1}^{N-1}\lambda_j^{(1)\downarrow}\,.
\end{equation}
Using the upper bound on $\widetilde{x}$ we finally arrive at
\begin{align}
\widetilde{v}_{N+1} - \widetilde{v}_{N}&\leq w\left(\lambda_{N+1}^{(1)\downarrow}-\lambda_N^{(1)\downarrow}\right) +\widetilde{x}-(1-w)\nonumber\\
&\leq 0\,.
\end{align}
Thus, $\widetilde{\bd v}$ is indeed decreasingly ordered as required. This concludes the derivation of the vertex representation of $\Sigma(\bd w, \bd\lambda^{(1)})$ for $r=2$.

\bibliography{Refs}

\end{document}